\documentclass{revtex-stylesheet}

\begin{document}
\title{An efficient Pauli decomposition algorithm for structured matrices}

\author{Daniel J. Spencer}
\email{djspence@umd.edu}
\affiliation{Joint Center for Quantum Information and Computer Science, NIST/University of Maryland, College Park, MD 20742, USA}
\affiliation{Joint Quantum Institute, NIST/University of Maryland, College Park, MD 20742, USA}
\affiliation{Department of Physics, University of Maryland, College Park, MD 20742, USA}

\author{Kishor Bharti}
\affiliation{Joint Center for Quantum Information and Computer Science, NIST/University of Maryland, College Park, MD 20742, USA}
\affiliation{Department of Computer Science and Institute for Advanced Computer Studies, University of Maryland, College Park, Maryland 20742, USA}

\author{Alexey V. Gorshkov}
\affiliation{Joint Center for Quantum Information and Computer Science, NIST/University of Maryland, College Park, MD 20742, USA}
\affiliation{Joint Quantum Institute, NIST/University of Maryland, College Park, MD 20742, USA}

\date{\today}

\begin{abstract}
    Decomposing classical matrices into linear combinations of Pauli strings is a major bottleneck for end-to-end implementations of near-term quantum algorithms. In this work, we consider a promise version of this Pauli decomposition problem in which the matrix is guaranteed to have support on only $k = \poly(n)$ Pauli strings and is given through classical sparse query access. Existing Pauli decomposition algorithms are designed for the generic, dense problem and do not inherently take advantage of this promised sparsity, so these approaches take time that is exponential in $n$. We present a randomized classical algorithm that does take advantage of this sparsity and recovers the exact Pauli decomposition with success probability at least $1 - \delta$, for any $\delta$. Under the stated access model, the algorithm executes with query and runtime complexity that is polynomial in $n$, $k$, and $\log(1/\delta)$. These results show that, even though finding the Pauli decomposition is exponentially hard for general matrices, it becomes efficiently solvable for matrices that are known to be sparse in the Pauli basis, a regime that is relevant to near-term quantum algorithms operating on structured classical input.
\end{abstract}

\maketitle

\section{Introduction}\label{sec:intro}
Quantum algorithms for applications in chemistry~\cite{lanyon2010towards,cao2019quantum,bauer2020quantum,mcardle2020quantum,motta2021emerging}, machine learning~\cite{lloyd2013quantum,schuld2014introduction,biamonte2017quantum,dunjko2018machine,ramezani2020machine,zaman2023survey,wang2024comprehensive}, and finance~\cite{orus2019quantum,bouland2020prospects,herman2023quantum} often require classically-specified data to be loaded into a representation accessible to a quantum device, leading to the so-called \emph{data-loading problem}~\cite{weigold2021encoding,weigold2021expanding,nakaji2022approximate,marin-sanchez2023quantum}. In many near-term intermediate-scale quantum (NISQ)~\cite{preskill2018quantum,bharti2022noisy} settings, this representation takes the form of a linear combination of unitaries (LCUs), where the unitaries are often taken to be Pauli strings due to their natural compatibility with current quantum hardware. Obtaining this Pauli decomposition can be computationally costly, though this problem is frequently overlooked, with many NISQ algorithms simply \emph{assuming} that the input data are already written in the Pauli basis. When considered in an end-to-end context, finding this decomposition may ruin the supposed speedup before even running the quantum algorithm.

This problem arises in several prominent algorithmic settings. Variational methods such as the variational quantum eigensolver (VQE)~\cite{peruzzo2014variational,kirby2021variational,tilly2022variational} often assume that the input to the problem is a Hamiltonian decomposed into a linear combination of Pauli strings. Other near-term algorithms for optimization~\cite{patel2024variational} and linear algebra~\cite{huang2021near-term,bravo2023variational,pellow2023near} make similar assumptions, where the input matrices (e.g., the matrix $A$ to solve the linear system $Ax = b$) are assumed to be provided as an LCU. For problems in physics and chemistry, such decompositions are readily available from the underlying structure of the problem. However, for more general classically-specified matrices, the Pauli decomposition may not be known in advance and must be explicitly computed. While VQE is seen as one of the leading candidates for a useful near-term algorithm, alternative algorithms~\cite{bharti2021iterative,bharti2022sdp} have been proposed that avoid some of the problems of VQE (e.g., the barren plateau problem~\cite{larocca2025barren}), but these too require the input to be given as a linear combination of Pauli strings. As such, there is a strong necessity for efficient Pauli decomposition algorithms.

All algorithms~\cite{bergholm2018pennylane,gunlycke2020efficient,romero2023paulicomposer,jones2024decomposing,koska2024tree-approach,georges2025pauli} for decomposing a $2^n \times 2^n$ matrix into the Pauli basis scale exponentially in $n$ in the worst case, where the matrix has support on all $4^n$ Pauli basis elements. In this case, there is an immediate worst-case lower bound of $\bigOmega{4^n}$ since all $4^n$ Pauli terms must be recovered. The most straightforward way to obtain the Pauli decomposition of a $2^n \times 2^n$ matrix $M$ is to calculate the overlap of each Pauli string with $M$ to obtain the coefficient. Assuming standard matrix multiplication runtime of $\bigOh{N^3}$ for an $N \times N$ matrix, this requires $4^n$ multiplications of two $2^n \times 2^n$ matrices, for a total time complexity of $\bigOh{32^n}$. Although several existing algorithms~\cite{gunlycke2020efficient,romero2023paulicomposer,jones2024decomposing,koska2024tree-approach,georges2025pauli} make significant improvements over the na\"{i}ve trace-overlap approach, they are designed for the generic dense decomposition problem, where they either compute all $4^n$ Pauli coefficients or require access to the full matrix representation; see~\cref{tab:previous-algorithms}. The best-known algorithm runs in time $\bigOh{n4^n}$ and is implemented in the Python package PennyLane~\cite{bergholm2018pennylane}.

\begin{table}[b]
    \begin{tblr}{colspec={|c|c|c|},hlines,vlines,row{even}={bg=themecolor!50},row{1}={font=\bfseries}}
        Algorithm & \makecell{Runtime complexity}\\
        Direct algorithm & $\bigOh{32^n}$\\
        PennyLane~\cite{bergholm2018pennylane} & $\bigOh{n4^n}$\\
        Gunlycke \etal~\cite{gunlycke2020efficient} & $\bigOh{n4^n}$\\
        PauliComposer~\cite{romero2023paulicomposer} & $\bigOh{8^n}$\\
        Jones~\cite{jones2024decomposing} & $\bigOh{8^n}$\\
        Koska \etal~\cite{koska2024tree-approach} & $\bigOh{8^n}$\\
        Georges \etal~\cite{georges2025pauli} & $\bigOh{n4^n}$
    \end{tblr}
    \caption{Comparison of runtime complexity of various Pauli decomposition algorithms.}
    \label{tab:previous-algorithms}
\end{table}

\begin{figure*}[t]
    \centering
    \begin{tikzpicture}[
        x=1cm,y=1cm,
        >=Latex,
        font=\captionsize,
        arrow/.style={-Latex, line width=0.65pt, shorten >=1pt, shorten <=1pt},
        looparrow/.style={-Latex, line width=0.55pt, dashed, shorten >=1pt, shorten <=1pt},
        panel/.style={
            draw,
            rounded corners=2.5pt,
            line width=0.55pt,
            align=center,
            inner sep=2.5pt
        },
        title/.style={font=\captionsize\bfseries, inner sep=0pt},
        note/.style={font=\captionsize, align=center, inner sep=0pt},
        atom/.style={
            circle,
            draw,
            line width=0.4pt,
            minimum size=4.0mm,
            inner sep=0pt,
            font=\captionsize
        },
        atomsmall/.style={
            circle,
            draw,
            line width=0.4pt,
            minimum size=3.2mm,
            inner sep=0pt,
            font=\captionsize
        },
        bucket/.style={
            draw,
            line width=0.4pt,
            minimum width=7.5mm,
            minimum height=7.0mm,
            inner sep=0pt
        }
    ]

        \node[
            panel,
            fill=gray!8,
            minimum width=4.2cm,
            minimum height=1.50cm
        ] (input) at (1.60,5.20) {};
        
        \node[title] at (2.75,5.65) {input};
        
        \node[font=\captionsize] at (0.30,5.12) {$M=$};
        
        \begin{scope}[shift={(0.88,4.74)}, x=0.13cm, y=0.13cm]
        
            \foreach \i in {0,...,5} {
                \foreach \j in {0,...,5} {
                    \fill[blue!3] (\i,\j) rectangle ++(1,1);
                }
            }
        
            \foreach \i/\j/\s in {
                0/5/10, 1/5/6,  2/5/34, 3/5/8,  4/5/20, 5/5/7,
                0/4/7,  1/4/26, 2/4/11, 3/4/35, 4/4/8,  5/4/18,
                0/3/23, 1/3/6,  2/3/14, 3/3/10, 4/3/39, 5/3/13,
                0/2/9,  1/2/34, 2/2/5,  3/2/24, 4/2/15, 5/2/30,
                0/1/16, 1/1/9,  2/1/28, 3/1/5,  4/1/22, 5/1/8,
                0/0/5,  1/0/18, 2/0/7,  3/0/31, 4/0/10, 5/0/26
            }{
                \fill[blue!\s] (\i,\j) rectangle ++(1,1);
            }
        
            \fill[yellow!45, opacity=0.50] (0,2) rectangle (6,3);
            \fill[yellow!45, opacity=0.50] (4,0) rectangle (5,6);
        
            \draw[step=1, white, line width=0.25pt] (0,0) grid (6,6);
            \draw[black, line width=0.45pt] (0,0) rectangle (6,6);
        
            \draw[-Latex, line width=0.55pt] (-1.05,0.6) -- (0,0.6);
            \node[font=\captionsize, anchor=east] at (-1.18,0.6) {$v$};
        
            \draw[-Latex, line width=0.55pt] (0.6,6.95) -- (0.6,6.0);
            \node[font=\captionsize, anchor=south] at (0.6,7.02) {$u$};
        \end{scope}
        
        \node[font=\captionsize, align=center] at (2.75,5.12)
            {sparse \\query access};
        
        \node[
            panel,
            fill=blue!6,
            minimum width=4.2cm,
            minimum height=1.45cm
        ] (xfib) at (1.60,3.35) {};
        
        \node[title] at (1.60,3.91) {determine $\mathcal{X}$};
        \node[note] at (1.60,3.66) {$x = u \oplus v$};
        
        \node[atom, fill=green!18] at (0.93,3.30) {$x_1$};
        \node[atom, fill=red!16]   at (1.60,3.30) {$x_2$};
        \node[atom, fill=green!18] at (2.27,3.30) {$x_3$};
        
        \node[note] at (1.60,2.82) {$\mathcal X=\{x:\exists z\}$};
        
        \node[
            panel,
            fill=blue!6,
            minimum width=4.2cm,
            minimum height=1.45cm
        ] (wht) at (1.60,1.50) {};
        
        \node[title] at (1.60,2.06) {shifts on fixed-$x$};
        
        \node[font=\captionsize, align=center] at (1.60,1.50)
            {$\frac{b_x(e_j)}{b_x(0^n)} \overset{?}{=} \pm 1$ for $j \in [n]$};
        
        \draw[arrow] (input.south) -- (xfib.north);
        \draw[arrow] (xfib.south) -- (wht.north);
        
        \node[
            panel,
            fill=green!8,
            minimum width=2.85cm,
            minimum height=1.45cm
        ] (single) at (6.55,3.55) {};
        
        \node[title] at (6.55,4.09) {unique $x$};
        
        \node[atom, fill=green!20] at (5.95,3.55) {$x$};
        \node[atom, fill=green!20] at (7.15,3.55) {$z$};
        
        \draw[arrow] (6.20,3.55) -- (6.90,3.55);
        
        \node[note] at (6.55,3.04) {decode $(z,\alpha)$};
        
        \node[
            panel,
            fill=red!7,
            minimum width=3.25cm,
            minimum height=1.70cm
        ] (deg) at (5.85,0.10) {};
        
        \node[title] at (5.85,0.76) {degenerate $x$};
        
        \node[atom, fill=red!16] at (4.98,0.12) {$z_1$};
        \node[atom, fill=red!16] at (5.56,0.12) {$z_2$};
        \node[atom, fill=red!16] at (6.14,0.12) {$z_3$};
        \node[atom, fill=red!16] at (6.72,0.12) {$z_4$};
        
        \node[note, text width=3.00cm] at (5.85,-0.55)
            {fixed $x$, several $z$'s};
        
        \node[
            panel,
            fill=purple!7,
            minimum width=3.75cm,
            minimum height=1.70cm
        ] (fold) at (9.75,0.10) {};
        
        \node[title] at (9.75,0.77) {random fold};
        \node[note] at (9.75,0.48) {hash $(S,R,w)$};
        
        \foreach \x in {8.75,9.75,10.75} {
            \node[bucket, fill=white] at (\x,-0.04) {};
        }
        
        \node[atomsmall, fill=red!16]   at (8.70,0.08) {$1$};
        \node[atomsmall, fill=red!16]   at (8.93,-0.10) {$2$};
        \node[atomsmall, fill=green!20] at (9.75,-0.01) {$3$};
        \node[atomsmall, fill=green!20] at (10.75,-0.01) {$4$};
        
        \node[note, text width=3.45cm] at (9.75,-0.57)
            {keep verified singletons};
        
        \node[
            panel,
            fill=yellow!12,
            minimum width=3.40cm,
            minimum height=1.70cm
        ] (coll) at (13.75,0.10) {};
        
        \node[title] at (13.75,0.76) {repeat, collect};
        
        \node[atom, fill=green!20] at (13.5,0.12) {$z_3$};
        \node[atom, fill=green!20] at (14,0.12) {$z_4$};
        
        \node[note] at (13.75,-0.55) {all singletons};
        
        \node[
            panel,
            fill=gray!10,
            minimum width=3.65cm,
            minimum height=1.45cm
        ] (out) at (13.75,3.55) {};
        
        \node[title] at (13.75,4.09) {output};
        
        \node[font=\captionsize, align=center] at (13.75,3.55)
            {Pauli decomposition:\\ $\mathcal{D} = \{P(x,z), \alpha_{x,z}\}$};
        
        \draw[arrow]
            (wht.east)
            to[out=35,in=270]
            node[left, note, pos=0.6, yshift=4pt] {$p_x = 1$}
            (single.south);
        
        \draw[arrow]
            (wht.south)
            to[out=-85,in=180]
            node[left, note, pos=0.50, yshift=-7pt] {$p_x > 1$}
            (deg.west);
        
        \draw[arrow] (single.east) -- (out.west);
        
        \draw[arrow] (deg.east) -- (fold.west);
        \draw[arrow] (fold.east) -- (coll.west);
        
        \draw[arrow] (coll.north) -- ++(0,1.55) -| (out.south);
        
        \draw[looparrow]
            (coll.south)
            .. controls (12.10,-2.25) and (10.15,-1.25) ..
            node[below, note, pos=0.4, yshift=-4pt] {fresh independent folds}
            (fold.south);

        \begin{pgfonlayer}{background}
        \node[
            draw=red!55!black,
            dashed,
            rounded corners=3pt,
            inner xsep=8pt,
            inner ysep=8pt,
            fit=(deg)(fold)(coll)
        ] {};
        \end{pgfonlayer}
        
        \node[font=\captionsize\bfseries, red!55!black] at (9.35,1.43)
            {degenerate $x$ decoder};
    \end{tikzpicture}

    \caption{Schematic of the Pauli decomposition algorithm (\cref{alg:full-algo}). Assuming query access to the matrix $M$ as described in~\cref{def:query-access-model}, the algorithm makes a series of queries that determine the set of active $x$ bit strings and stores them in $\mathcal{X}$ (see the beginning the~\cref{sec:pauli-decomposition-algorithm}). Then, for each $x \in \mathcal{X}$, the algorithm determines if $x$ is unique and, if so, decodes it by finding its associated $z$ bit string and coefficient $\alpha$ to construct the full term in the Pauli decomposition (\cref{subsec:unique-x-bit-strings,alg:unique-x-bit-strings}). If $x$ is degenerate, the algorithm passes $x$ to the degenerate $x$ decoder (\cref{alg:determine-degenerate-x-pauli-string}), which uses the random folding procedure described in~\cref{subsec:degenerate-x-algorithm} to determine the $z$ bit strings and coefficients associated with each degenerate $x$. At the end, everything is put together to build the full Pauli decomposition of $M$.}
    \label{fig:algorithm-schematic}
\end{figure*}

In this paper, we study a promise version of the Pauli decomposition problem. In particular, we assume that the input is a $2^n \times 2^n$ matrix that has support on $k = \poly(n)$ elements of the Pauli basis and that the matrix is specified via a classical sparse query model. Under this model, we present a randomized classical algorithm that determines the Pauli decomposition with success probability at least $1-\delta$ for a user-chosen $\delta \in (0,1)$, with both query and runtime complexity scaling polynomially in $n$, $k$, and $\log(1/\delta)$. The above-mentioned existing algorithms do not automatically become polynomial-time under the promise that the Pauli support has size $k=\poly(n)$ because there are still exponentially many nonzero entries and the algorithms require the full matrix.

Our algorithm first identifies the distinct Pauli-$X$ parts present in the decomposition, resolves the Pauli strings that have a unique $X$ part, and then handles the non-unique (or \emph{degenerate}) parts using sparse Walsh-Hadamard decoding. The Walsh-Hadamard transform also plays a key role in the dense Pauli decomposition algorithm of Georges \etal~\cite{georges2025pauli}, which computes the full set of Pauli coefficients for an arbitrary input matrix in time $\bigOh{n4^n}$. In contrast, in our setting, we can avoid computing all $4^n$ coefficients by using the promised sparsity, a randomized support discovery procedure, and sparse Walsh-Hadamard techniques to identify only the $\poly(n)$ nonzero terms efficiently. Thus, while the unrestricted Pauli decomposition problem remains exponentially costly in the worst case, we show that it becomes efficiently solvable for matrices with polynomial Pauli-support.

The rest of this article is organized as follows. In~\cref{sec:prelims}, we review the formalism and notation for the Pauli group and give a brief overview of basic Walsh-Hadamard analysis. In~\cref{sec:pauli-decomposition-algorithm}, we define the problem and give an overview of our decomposition algorithm. We benchmark our algorithm's performance on numerical examples in~\cref{sec:numerics}. Finally, we offer some concluding remarks in~\cref{sec:discussion} to put our algorithm in context. We relegate detailed analysis of the correctness, error, query complexity, and runtime complexity of our algorithm to~\cref{app-sec:correctness-error-analysis,app-sec:query-complexity,app-sec:time-complexity,app-sec:overall-guarantee}.

\section{Preliminaries}\label{sec:prelims}
We start by providing some background on the Pauli group~\cite{aguilar2024full} and introducing the notation that we use in this work. We also give a brief overview of Walsh-Hadamard analysis over the Boolean cube, where we present several results that we will use to develop our algorithm.

\subsection{The Pauli group}\label{subsec:pauli-group-background}
Throughout, we make use of the notation $[n] := \{1, 2, \ldots, n\}$. Let the Pauli matrices be denoted by
\begin{align}
    I &= \idmat, \quad X = \PauliX,\\
    Y &= \PauliY, \quad Z = \PauliZ.
\end{align}
With this, we define the $n$-qubit Pauli group:

\begin{definition}[Pauli group]\label{def:pauli-group}
    Let $n \in \NN$. The degree-$n$ \emph{Pauli group}, denoted $\mathcal{P}_n$, is the set $\mathcal{P}_n \coloneqq \lc \omega\bigotimes_{j=1}^{n}{P_j} : \omega \in \{\pm 1, \pm i\}, P_j \in \{I,X,Y,Z\}\ \forall j \in [n] \rc$, where the cardinality of the group is $\abs{\mathcal{P}_n} = 4^{n+1}$.
\end{definition}

In this paper, we do not need to keep track of all four phases $\omega \in \{\pm 1, \pm i\}$ separately. This motivates the following definition of \emph{Pauli strings}:

\begin{definition}[Pauli string]\label{def:pauli-string}
    A degree-$n$ \emph{Pauli string} is an $n$-fold tensor product of the single-qubit Pauli matrices:
    \begin{equation}
        P = \bigotimes_{j=1}^{n}{P_j},
    \end{equation}
    where $P_j \in \{I,X,Y,Z\}$ for all $j \in [n]$.
\end{definition}

We denote by $\hat{\mathcal{P}}_n$ the set of all such degree-$n$ Pauli strings and thus $\abs{\hat{\mathcal{P}}_n} = 4^n$. Therefore, while $\mathcal{P}_n$ is the full Pauli group, the set $\hat{\mathcal{P}}_n$ is the relevant object for the purposes of finding the Pauli decomposition of a matrix and is thus what we work with in this paper. $\hat{\mathcal{P}}_n$ forms a complete basis for the set of $2^n \times 2^n$ complex-valued matrices~\cite[Thm. 1]{koska2024tree-approach}. That is, if $M \in \CC^{2^n \times 2^n}$, then $M$ can be written as
\begin{align}
    M = \defsum{i=1}{k}{\alpha^{(i)} P^{(i)}},\label{eqn:M-linear-combination}
\end{align}
where we denote the $i\numth$ Pauli string in the sum by $P^{(i)} \in \hat{\mathcal{P}}_n$ and its coefficient as $\alpha^{(i)} \in \CC$, where $i \in [k]$. In the worst case, $M$ has support on all $4^n$ Pauli strings. We note that if the matrix to be decomposed into the Pauli basis is Hermitian, then the coefficients are restricted to be real: $M = M^\dagger \iff \alpha^{(i)} \in \RR$ for all $i \in [k]$.

The above representation is conceptually simple, but for the algorithm that we propose, it is more convenient to label each Pauli string by binary strings. In particular, we make use of the \emph{binary symplectic representation} of Pauli strings:

\begin{definition}[Binary symplectic form]\label{def:binary-symplectic-form}
    Let $x,z \in \{0,1\}^n$ be $n$-bit strings. The Pauli string labeled $(x \mid z)$ is
    \begin{align}
        P(x,z) &\coloneqq (-i)^{\abs{x \wedge z}}\bigotimes_{j=1}^{n}{Z^{z_j}X^{x_j}},
    \end{align}
    where
    \begin{equation}
        \abs{x \wedge z } \coloneqq \defsum{j=1}{n}{x_j z_j}
    \end{equation}
    counts the number of overlapping 1's in $x$ and $z$.
\end{definition}

Thus, the pair $(x_j, z_j)$ determines the single-qubit Pauli operator acting on the $j\numth$ qubit:
\begin{equation}
    \begin{aligned}
        (0,0) &\mapsto I,\quad (1,0) \mapsto X,\\
        (0,1) &\mapsto Z,\quad (1,1) \mapsto Y.
    \end{aligned}
\end{equation}
The prefactor $(-i)^{\abs{x \wedge z}}$ ensures the correct imaginary factors in the Pauli-$Y$ matrices, which appear when $x_j = z_j = 1$. The bit strings $x$ and $z$ thus give a complete description of each Pauli string. When we write a matrix $M$ as a linear combination of Pauli strings, as in~\cref{eqn:M-linear-combination}, we denote the binary symplectic labels of the $i\numth$ Pauli string by $x^{(i)}$ and $z^{(i)}$, where $x^{(i)}$ is the $X$-part and $z^{(i)}$ is the $Z$-part, that is, $(x^{(i)} \mid z^{(i)})$. This notation will be central in~\cref{sec:pauli-decomposition-algorithm} when we develop the algorithm.

We conclude this subsection with two useful properties of Pauli strings. First, the following lemma formalizes the relation between the rows and columns of Pauli strings:

\begin{lemma}[Row and column properties of Pauli strings]\label{lemma:row-column-pauli-strings}
    Let $P(x,z)$ be a degree-$n$ Pauli string written in binary symplectic form as in~\cref{def:binary-symplectic-form},
    \begin{align}
        P(x,z) = (-i)^{\abs{x \wedge z}}\bigotimes_{j=1}^{n}{Z^{z_j}X^{x_j}},
    \end{align}
    where $x,z \in \{0,1\}^n$. Assuming zero-indexing, denote the row indices of $P(x,z)$ by $v$ and its column indices by $u$, where $v, u \in \{0,1\}^n$ are $n$-bit strings, and note that each row and column of $P(x,z)$ has a single nonzero element. Then, for the nonzero matrix element $P(x,z)_{v,u}$ in row $v$ and column $u$, the following is true:
    \begin{align}
        x \oplus u \oplus v = 0^n,
    \end{align}
    where $\oplus$ denotes bit-wise addition modulo 2 and $0^n = 0 \ldots 0$ is the all-zero string.
\end{lemma}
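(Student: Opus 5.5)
The plan is to compute the matrix elements of $P(x,z)$ explicitly, using the tensor-product structure and the computational basis. The global phase $(-i)^{\abs{x \wedge z}}$ has modulus one, so it cannot change which entries vanish. It therefore suffices to study the real matrix $\bigotimes_{j=1}^{n} Z^{z_j} X^{x_j}$. I will label rows and columns by bit strings $v,u \in \{0,1\}^n$ and write basis vectors as $\ket{u} = \bigotimes_j \ket{u_j}$. With zero-indexing, the matrix element in row $v$ and column $u$ is $\bra{v} P(x,z) \ket{u}$.

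The first step is the single-qubit computation. For $a,b \in \{0,1\}$ we have $X^{a}\ket{b} = \ket{b \oplus a}$ and $Z^{c}\ket{d} = (-1)^{cd}\ket{d}$. Combining these gives $Z^{z_j}X^{x_j}\ket{u_j} = (-1)^{z_j(u_j \oplus x_j)}\ket{u_j \oplus x_j}$. Taking the tensor product, I expect
\begin{equation}
    P(x,z)\ket{u} = (-i)^{\abs{x \wedge z}} (-1)^{z \cdot (u \oplus x)} \ket{u \oplus x},
\end{equation}
where $z \cdot w = \sum_j z_j w_j \bmod 2$.

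Hence $\bra{v}P(x,z)\ket{u} = (-i)^{\abs{x \wedge z}}(-1)^{z\cdot(u\oplus x)}\,\delta_{v,\,u\oplus x}$. This shows directly that each column $u$ has exactly one nonzero entry, located at row $v = u \oplus x$, and its value is a unit-modulus phase, so it is indeed nonzero. Because $u \mapsto u \oplus x$ is a bijection of $\{0,1\}^n$, each row likewise has exactly one nonzero entry. This confirms the parenthetical claim in the statement.

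Finally, a nonzero entry requires $v = u \oplus x$. XORing both sides with $v$ and using $w \oplus w = 0^n$ gives $x \oplus u \oplus v = 0^n$. No step is a real obstacle. The only point needing care is the ordering convention: $X^{x_j}$ acts before $Z^{z_j}$, and the row index is $v$ while the column index is $u$. Neither choice affects the support, since the sign factor is never zero and the relation is symmetric in $u$ and $v$.
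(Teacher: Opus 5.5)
Your proof is correct and follows essentially the same route as the paper: both factor the tensor product and note that on each qubit $Z^{z_j}X^{x_j}$ is nonzero exactly when $v_j = u_j \oplus x_j$. The paper phrases this as diagonal versus off-diagonal, while you write it as $Z^{z_j}X^{x_j}\ket{u_j} = \pm\ket{u_j \oplus x_j}$; your explicit entry computation is the one the paper uses for the next (matrix-element) lemma, so you get both lemmas in one pass.
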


\begin{proof}
    Looking at the nonzero element $P(x,z)_{v,u}$, we require each of $(Z^{z_j}X^{x_j})_{v_j u_j} \neq 0$ for all $j \in [n]$. Then, looking at each term in $P(x,z)$ individually, if $x_j = 0$, then $Z^{z_j}X^{x_j} = Z^{z_j}$ is diagonal and so the nonzero elements are exactly specified by $u_j = v_j$. Since $u_j, v_j \in \{0,1\}$, this is equivalent to $u_j \oplus v_j = 0$. Likewise, if $x_j = 1$, then $Z^{z_j}X^{x_j}$ is off-diagonal and so the nonzero elements are exactly specified by $u_j \neq v_j$, which is equivalent to $u_j \oplus v_j = 1$. Putting this together, the matrix element is nonzero if and only if $x_j \oplus v_j \oplus u_j = 0$ and this must hold for all $j \in [n]$, so we conclude that $x \oplus v \oplus u = 0^n$. Observe that because $x$ is fixed, for any given $u$ (or $v$), there is exactly one $v$ (or $u$) that "matches" and ensures that $x \oplus v \oplus u = 0^n$, so there is exactly one nonzero element in each row and column.
\end{proof}

Another useful relation is the value of the nonzero matrix element:

\begin{lemma}[Pauli string matrix element]\label{lemma:pauli-string-matrix-element}
    Let $P(x,z)$ be a degree-$n$ Pauli string described by $x, z \in \{0,1\}^n$. Then, the matrix element in row $v$ and column $u$ is given by
    \begin{align}
        P(x,z)_{v,u} = 
        \begin{cases}
            i^{\abs{x \wedge z}}(-1)^{z \cdot u}, & v = x \oplus u\\
            0 & \text{otherwise}.
        \end{cases}
    \end{align}
    where $v, u \in \{0,1\}^n$ and $z \cdot u \coloneqq z_1 u_1 + \cdots + z_n u_n \pmod{2}$ denotes the bitwise inner product modulo 2.
\end{lemma}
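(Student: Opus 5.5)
The plan is to compute the matrix element qubit by qubit, using the tensor-product structure of \cref{def:binary-symplectic-form}, and then collect the global phase. First, by \cref{lemma:row-column-pauli-strings}, $P(x,z)_{v,u}$ vanishes unless $x \oplus u \oplus v = 0^n$, that is, unless $v = x \oplus u$. This settles the ``otherwise'' case, so it remains to evaluate the entry when $v = x \oplus u$.

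For the nonzero case, I will use that the matrix element of a tensor product is the product of single-qubit matrix elements:
\begin{align}
    P(x,z)_{v,u} = (-i)^{\abs{x \wedge z}} \prod_{j=1}^{n} \bigl(Z^{z_j}X^{x_j}\bigr)_{v_j,u_j}.
\end{align}
For each qubit, I act on a computational basis state. $X^{x_j}$ sends $\ket{u_j}$ to $\ket{u_j \oplus x_j}$, and $Z^{z_j}$ then contributes the phase $(-1)^{z_j(u_j \oplus x_j)}$. Hence
\begin{align}
    \bigl(Z^{z_j}X^{x_j}\bigr)_{v_j,u_j} = (-1)^{z_j(u_j \oplus x_j)}
\end{align}
when $v_j = u_j \oplus x_j$. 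Since only the parity of the exponent matters, $z_j(u_j \oplus x_j) \equiv z_j u_j + z_j x_j \pmod 2$. Taking the product over $j$ therefore gives $(-1)^{z\cdot u}(-1)^{\abs{x\wedge z}}$.

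The last step combines this with the prefactor, using $(-i)^{m}(-1)^{m} = i^{m}$ with $m = \abs{x \wedge z}$. This yields $P(x,z)_{v,u} = i^{\abs{x\wedge z}}(-1)^{z\cdot u}$, as claimed.

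No step is a real obstacle. The only point needing care is the operator ordering $Z^{z_j}X^{x_j}$. Because $X$ acts first, the $Z$ phase is evaluated on the output index $v_j = u_j \oplus x_j$ rather than on $u_j$. This is exactly what produces the extra $(-1)^{\abs{x\wedge z}}$ that converts $(-i)^{\abs{x \wedge z}}$ into $i^{\abs{x\wedge z}}$. As a sanity check, the single-qubit case $x = z = 1$ should give $Y_{1,0} = i$ and $Y_{0,1} = -i$, and the formula does give $i(-1)^{u}$.
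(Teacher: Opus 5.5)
Your proposal is correct and follows the paper's proof essentially step for step. Like the paper, you use the row--column lemma for the zero case and compute $(Z^{z_j}X^{x_j})_{v_j,u_j}$ from $X^{x_j}\ket{u_j}=\ket{u_j\oplus x_j}$; you then expand $z_j(u_j\oplus x_j)$ modulo 2 and absorb the resulting $(-1)^{\abs{x\wedge z}}$ into the $(-i)^{\abs{x\wedge z}}$ prefactor.
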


\begin{proof}
    We compute the matrix element one qubit at a time. For each $j \in [n]$, we have
    \begin{equation}
        X^{x_j}\ket{u_j} = \ket{u_j \oplus x_j},
    \end{equation}
    so
    \begin{equation}
        Z^{z_j} X^{x_j}\ket{u_j} = (-1)^{z_j(u_j \oplus x_j)}\ket{u_j \oplus x_j}.
    \end{equation}
    Thus, we have
    \begin{equation}
        (Z^{z_j} X^{x_j})_{v_j,u_j} = \delta_{v_j,u_j \oplus x_j}(-1)^{z_j(u_j \oplus x_j)}.
    \end{equation}
    Taking the tensor product across all $j \in [n]$ for the whole Pauli string, we have
    \begin{equation}
        P(x,z)_{v,u} = (-i)^{\abs{x \wedge z}}\defprod{j=1}{n}{\delta_{v_j, u_j \oplus x_j}(-1)^{z_j(u_j \oplus x_j)}}.
    \end{equation}
    By~\cref{lemma:row-column-pauli-strings}, the matrix element $P(x,z)_{v,u} \neq 0$ if and only if
    \begin{equation}
        v_j = u_j \oplus x_j\ \forall j \in [n],
    \end{equation}
    which is equivalent to saying that $v = x \oplus u$ and so we are left with
    \begin{equation}
        P(x,z)_{v,u} = (-i)^{\abs{x \wedge z}}\defprod{j=1}{n}{(-1)^{z_j(u_j \oplus x_j)}}.
    \end{equation}
    Using the fact that
    \begin{equation}
        z_j(u_j \oplus x_j) \equiv z_ju_j + z_jx_j \pmod{2},
    \end{equation}
    we have
    \begin{align}
        \defprod{j=1}{n}{(-1)^{z_j(u_j \oplus x_j)}} &= (-1)^{z \cdot u}(-1)^{\defsum{j=1}{n}{z_jx_j}}\\
        &= (-1)^{z \cdot u}(-1)^{\abs{x \wedge z}}.
    \end{align}
    Thus, putting everything together, we have
    \begin{align}
        P(x,z)_{v,u} &= (-i)^{\abs{x \wedge z}}(-1)^{\abs{x \wedge z}}(-1)^{z \cdot u}\\
        &= i^{\abs{x \wedge z}}(-1)^{z \cdot u},
    \end{align}
    which proves the claim.
\end{proof}

\subsection{Walsh-Hadamard analysis}\label{subsec:walsh-hadamard-analysis}
As we will see when we present the Pauli decomposition algorithm in~\cref{sec:pauli-decomposition-algorithm}, we will make use of basic Walsh-Hadamard analysis on the Boolean cube $\{0,1\}^n$~\cite{odonnell2014analysis,kushilevitz1993learning,scheibler2015fast}. We work over the vector space $\FF_2^n$, which we identify with the set of $n$-bit strings $\{0,1\}^n$. We define the \emph{Walsh character} as the $\pm 1$-valued function
\begin{equation}
    \chi_z(u) \coloneqq (-1)^{z \cdot u}.\label{eqn:walsh-character}
\end{equation}
The Walsh characters $\chi_z(u) = (-1)^{z \cdot u}$ are the Boolean-cube equivalents of Fourier modes. Thus, the Walsh-Hadamard transform expands a function $f$ as a linear combination of these Walsh character modes. In our setting, the variable $u$ will be a column index for the matrix $M$ and the frequency label $z$ is the Pauli-$Z$ part of a Pauli string. With this, we define the \emph{Walsh-Hadamard transform}:

\begin{definition}[Walsh-Hadamard transform]\label{def:walsh-hadamard-transform}
    Let $f : \{0,1\}^n \to \CC$. The \emph{Walsh-Hadamard transform} of $f$ is the function $\hat{f} : \{0,1\}^n \to \CC$ such that
    \begin{equation}
        f(u) = \defsum{z \in \{0,1\}^n}{}{\hat{f}(z)\chi_z(u)},
    \end{equation}
    where $\chi_z(u)$ is the Walsh character given in~\cref{eqn:walsh-character}.
\end{definition}

In this work, we will find that the Walsh-Hadamard transform matches the sign pattern produced by Pauli strings, which is a useful tool in our algorithm. In particular, the query function $b_x(u)$ that we define in~\cref{sec:pauli-decomposition-algorithm} will have exactly this form, with support indexed by the $z$ bit strings for a fixed $x$ bit string present in the decomposition. Next, we define the \emph{Walsh support}:

\begin{definition}[Walsh support]\label{def:walsh-support}
    Let $f : \{0,1\}^n \to \CC$ with Walsh-Hadamard transform $\hat{f} : \{0,1\}^n \to \CC$. The \emph{Walsh support} of $f$ is
    \begin{equation}
        \mathrm{supp}(\hat{f}) \coloneqq \{z \in \{0,1\}^n : \hat{f}(z) \neq 0\}.
    \end{equation}
    We say that $f$ is $s$-sparse in the Walsh domain if $\abs{\mathrm{supp}(\hat{f})} = s$.
\end{definition}

Thus, if a function on the Boolean cube can be written as the sum of only a few Walsh characters, then we say that its Walsh spectrum is sparse. Recovering (i.e., finding) this sparse spectrum is exactly the classical decoding problem that we use to deal with \emph{degenerate} $x$ bit strings (i.e., two or more Pauli strings described by the same $x$ bit string but that have different $Z$-parts) in our Pauli decomposition algorithm. Classical Fourier-based algorithms frequently exploit sparsity to greatly enhance efficiency~\cite{kushilevitz1993learning,scheibler2015fast}.

Using these notions, we can use \emph{shifts} to reveal information about hidden spectra. Let $f : \{0,1\}^n \to \CC$ be a function with Walsh-Hadamard expansion
\begin{equation}
    f(u) = \defsum{z \in \{0,1\}^n}{}{\hat{f}(z)(-1)^{z \cdot u}},
\end{equation}
and let $w \in \{0,1\}^n$ be a \emph{shift}. Define the shifted function
\begin{equation}
    f^{(w)}(u) \coloneqq f(u \oplus w).
\end{equation}
Then,
\begin{equation}
    f^{(w)}(u) = \defsum{z \in \{0,1\}^n}{}{\hat{f}(z)(-1)^{z \cdot w}(-1)^{z \cdot u}},
\end{equation}
which is a result of a simple application of the linearity of the inner product over $\FF_2$. This relation is key to the decoding procedure that we use to break the degeneracy of non-unique $x$ bit strings and recover the present $z$ bit strings. Once a single spectral component $z$ has been isolated, shifts reveal the signs $(-1)^{z \cdot w}$. In particular, taking $w = e_j$, where $e_j$ are standard basis vectors with a 1 in the $j\numth$ position and 0's elsewhere, reveals the individual bits of $z$, allowing us to build up each $z$ bit-by-bit. However, applying this idea directly to an $n$-bit Walsh-Hadamard transform would require evaluating the function on all $2^n$ inputs. Instead, we will find a \emph{folding} operation useful, which allows us to hash $n$-bit spectral indices into a smaller number of $m$-bit bins, with $m \ll n$:

\begin{lemma}[Shifted folding identity]\label{lemma:folding-identity}
    Let $f : \{0,1\}^n \to \CC$ have a Walsh-Hadamard expansion
    \begin{equation}
        f(u) = \defsum{z\in\{0,1\}^n}{}{\hat{f}(z)(-1)^{z \cdot u}}.
    \end{equation}
    Let $R \in \{0,1\}^{m \times n}$ be a binary matrix (i.e., a matrix whose elements are from $\ZZ/2\ZZ \cong \FF_2$) and define the \emph{shifted folded query function} on shift $w \in \{0,1\}^n$ as
    \begin{equation}
        y^{(w)}(t) \coloneqq f(R^\intercal t \oplus w),\quad t \in \{0,1\}^m.
    \end{equation}
    Then, $y^{(w)}$ has Walsh-Hadamard expansion
    \begin{equation}
        y^{(w)}(t) = \defsum{s\in\{0,1\}^m}{}{g^{(w)}(s)(-1)^{s \cdot t}},
    \end{equation}
    where
    \begin{equation}
        g^{(w)}(s) = \defsum{z:Rz=s}{}{\hat{f}(z)(-1)^{z \cdot w}}
    \end{equation}
    is the \emph{shifted folded spectrum}. In the unshifted case where $w = 0^n$, this reduces to
    \begin{equation}
        g(s) = \defsum{z:Rz=s}{}{\hat{f}(z)}.
    \end{equation}
    Thus, $R$ folds the original $n$-bit Walsh spectrum into $2^m$ \emph{bins} indexed by $s \in \{0,1\}^m$, and each bin contains the total contribution of all $z$ such that $Rz = s$.
\end{lemma}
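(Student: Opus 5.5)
The proof is a direct substitution followed by regrouping the sum over $z$ according to the value of $Rz$. First we substitute $u = R^\intercal t \oplus w$ into the Walsh-Hadamard expansion of $f$, which gives
\begin{equation}
    y^{(w)}(t) = \defsum{z\in\{0,1\}^n}{}{\hat{f}(z)(-1)^{z \cdot (R^\intercal t \oplus w)}}.
\end{equation}
By the linearity of the inner product over $\FF_2$, already used above for the shifted function $f^{(w)}$, the sign factorizes as $(-1)^{z\cdot(R^\intercal t \oplus w)} = (-1)^{z \cdot R^\intercal t}(-1)^{z\cdot w}$.

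The one step that needs care is the transpose identity $z \cdot R^\intercal t = (Rz)\cdot t \pmod 2$. We verify it entrywise:
\begin{equation}
    z \cdot R^\intercal t = \defsum{j=1}{n}{z_j \defsum{i=1}{m}{R_{ij}t_i}} = \defsum{i=1}{m}{t_i \defsum{j=1}{n}{R_{ij}z_j}} = (Rz)\cdot t,
\end{equation}
with every sum taken modulo $2$. Only the parity of the exponent matters for $(-1)^{(\cdot)}$, so reducing modulo $2$ at intermediate stages is harmless.

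With this identity in hand, the summand becomes $\hat{f}(z)(-1)^{z\cdot w}(-1)^{(Rz)\cdot t}$. We then partition $\{0,1\}^n$ into the fibers $\{z : Rz = s\}$ for $s\in\{0,1\}^m$. This is a disjoint cover, possibly with empty fibers, which contribute $g^{(w)}(s)=0$. On each fiber the character $(-1)^{(Rz)\cdot t}$ equals $(-1)^{s\cdot t}$, so it factors out of the inner sum. What remains inside is exactly $g^{(w)}(s)$, which gives the claimed expansion.

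For the unshifted case we set $w=0^n$, so that $(-1)^{z\cdot 0^n}=1$ and $g^{(w)}$ reduces to $g$. The bin interpretation follows because the fibers are indexed by the $2^m$ values of $s$.

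We do not need uniqueness of the Walsh expansion on $\{0,1\}^m$ to identify $g^{(w)}$ as the transform of $y^{(w)}$. If we wanted it, it would follow from the orthogonality $\sum_t (-1)^{(s\oplus s')\cdot t} = 2^m\delta_{s,s'}$. In any case the identity holds as stated, and there is no real obstacle beyond writing the transpose step correctly.
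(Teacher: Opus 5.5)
Your proposal is correct and follows the same route as the paper: substitute $u = R^\intercal t \oplus w$ into the expansion of $f$, factor the sign using $z\cdot(R^\intercal t) = (Rz)\cdot t$, and group the terms by $s = Rz$. The only differences are that you verify the transpose identity entrywise, and you note that uniqueness of the expansion (from orthogonality of characters) is what lets $g^{(w)}$ be called \emph{the} transform of $y^{(w)}$; the paper leaves both points implicit.
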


\begin{proof}
    If we expand $f(R^\intercal t \oplus w)$ in terms of its Walsh characters, we have
    \begin{align}
        y^{(w)}(t) &= \defsum{z\in \{0,1\}^n}{}{\hat{f}(z)(-1)^{z \cdot (R^\intercal t \oplus w)}}\\
        &= \defsum{z\in\{0,1\}^n}{}{\hat{f}(z)(-1)^{z \cdot w}(-1)^{(Rz) \cdot t}}.
    \end{align}
    Grouping together terms $s = Rz$, we have
    \begin{equation}
        y^{(w)}(t) = \defsum{s\in\{0,1\}^m}{}{\parens*{\defsum{z:Rz=s}{}{\hat{f}(z)(-1)^{z \cdot w}}}(-1)^{s \cdot t}}.
    \end{equation}
    If we compare this with the Walsh-Hadamard expansion of $y^{(w)}$ on $\{0,1\}^m$, the term in front of $(-1)^{s \cdot t}$ is exactly
    \begin{equation}
        g^{(w)}(s) = \defsum{z:Rz=s}{}{\hat{f}(z)(-1)^{z \cdot w}},
    \end{equation}
    proving the claim.
\end{proof}

This result allows us to randomly group the $z$ support into a small number of bins. A bin $s$ is called a \emph{singleton} bin if exactly one nonzero coefficient $\hat{f}(z)$ satisfies $Rz = s$. In that case, $g(s) = \hat{f}(z)$ and the shifted spectra satisfy
\begin{equation}
    \frac{g^{(w)}(s)}{g(s)} = (-1)^{z \cdot w}.
\end{equation}
Taking $w = e_j$ thus reveals the $j\numth$ bit of $z$, as mentioned above. On the other hand, if several elements collide and end up in the same bin, these ratios need not correspond to a single $z$, and further processing is needed. This, along with the result in~\cref{lemma:folding-identity}, allows us to reduce an $n$-bit sparse recovery problem to a small $m$-bit transform problem, where if we choose $m \ll n$, the procedure becomes efficient.

Finally, we invoke an uncertainty principle result, which we will use in the development of the algorithm to check our candidate $z$ bit strings:

\begin{fact}[Boolean uncertainty principle]\label{fact:boolean-uncertainty-principle}
    Let $f : \{0,1\}^n \to \CC$ be a nonzero function and let $\hat{f}$ denote its Walsh-Hadamard spectrum. Then,
    \begin{equation}
        \abs{\mathrm{supp}(f)} \cdot \abs{\mathrm{supp}(\hat{f})} \geq 2^n.
    \end{equation}
    In other words, if $\hat{f}$ has support size at most $s$, then a uniformly random input $u$ satisfies $f(u) \neq 0$ with probability at least $1/s$.
\end{fact}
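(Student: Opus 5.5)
The inequality $\abs{\mathrm{supp}(f)}\cdot\abs{\mathrm{supp}(\hat f)} \geq 2^n$ will come from comparing an $L^\infty$ bound on $f$ with an $L^\infty$ bound on $\hat f$, passing through Parseval's identity. The probabilistic reformulation will then follow in one line.

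\textbf{Parseval in the paper's normalization.} By \cref{def:walsh-hadamard-transform} we have $f(u)=\sum_z \hat f(z)\chi_z(u)$. The characters satisfy $\sum_u \chi_z(u)\chi_{z'}(u) = 2^n\delta_{z,z'}$, so
\begin{equation}
    \hat f(z)=2^{-n}\sum_u f(u)\chi_z(u),
    \qquad
    \sum_u \abs{f(u)}^2 = 2^n \sum_z \abs{\hat f(z)}^2 .
\end{equation}

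\textbf{Two $L^\infty$ estimates.} Let $S=\mathrm{supp}(f)$ and $T=\mathrm{supp}(\hat f)$. Both are nonempty because $f\neq 0$.
\begin{itemize}
    \item From the inversion formula and Cauchy--Schwarz over $S$:
    \begin{equation}
        \max_z\abs{\hat f(z)} \le 2^{-n}\sum_{u\in S}\abs{f(u)} \le 2^{-n}\abs{S}^{1/2}\Bigl(\sum_u\abs{f(u)}^2\Bigr)^{1/2}.
    \end{equation}
    \item From the expansion $f=\sum_{z\in T}\hat f(z)\chi_z$ and Cauchy--Schwarz over $T$:
    \begin{equation}
        \max_u\abs{f(u)} \le \sum_{z\in T}\abs{\hat f(z)} \le \abs{T}^{1/2}\Bigl(\sum_z\abs{\hat f(z)}^2\Bigr)^{1/2}.
    \end{equation}
\end{itemize}

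\textbf{Chaining.} Set $A=\sum_u\abs{f(u)}^2>0$, so Parseval gives $\sum_z\abs{\hat f(z)}^2=2^{-n}A$. Then
\begin{equation}
    A \le \abs{S}\,\max_u\abs{f(u)}^2 \le \abs{S}\,\abs{T}\,2^{-n}A .
\end{equation}
Dividing by $A$ gives $\abs{S}\abs{T}\ge 2^n$. The first estimate is not needed for this chain, so only the second one plus Parseval are used.

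\textbf{Probabilistic form.} If $\abs{T}\le s$, then $\abs{S}\ge 2^n/s$. A uniformly random $u$ therefore lies in $S$ with probability $\abs{S}/2^n\ge 1/s$.

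\textbf{Main obstacle.} The only delicate step is the normalization. The paper's transform puts no $2^{-n}$ on the synthesis side, and the factors of $2^n$ in Parseval must be placed correctly for them to cancel to exactly $2^n$. I would verify this first with $f=\chi_z$, where $\abs{S}=2^n$ and $\abs{T}=1$. That check also shows the bound is tight.
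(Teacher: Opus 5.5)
Your proof is correct, but it takes a different route from the paper, which gives no argument of its own: it cites the uncertainty inequality for finite abelian groups (Meshulam) and specializes it to $G=(\ZZ_2)^n$.

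You instead give the standard self-contained Parseval/$\ell^\infty$ argument in the style of Donoho--Stark. It rests on three facts:
\begin{itemize}
  \item $\sum_u\abs{f(u)}^2\le\abs{\mathrm{supp}(f)}\max_u\abs{f(u)}^2$;
  \item $\max_u\abs{f(u)}^2\le\abs{\mathrm{supp}(\hat f)}\sum_z\abs{\hat f(z)}^2$, by Cauchy--Schwarz;
  \item the Parseval identity $\sum_u\abs{f(u)}^2=2^n\sum_z\abs{\hat f(z)}^2$, which you derive correctly for the paper's convention with no $2^{-n}$ on the synthesis side.
\end{itemize}
Every step checks out, including $A>0$ from $f\neq 0$ and the passage to $\Pr_u[f(u)\neq 0]=\abs{\mathrm{supp}(f)}/2^n\ge 1/s$.

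Two small remarks. First, your first estimate (the bound on $\max_z\abs{\hat f(z)}$) is never used and can be deleted. Second, the normalization you flag as the main obstacle cannot affect the conclusion. Rescaling $\hat f$ by a nonzero constant leaves $\mathrm{supp}(\hat f)$ unchanged, so all that matters is that your Parseval identity is internally consistent with the chosen convention, and it is.

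As for what each route buys: the citation covers arbitrary finite abelian groups. In Meshulam's refined form it also gives a strictly stronger bound when $\abs{\mathrm{supp}(f)}$ lies strictly between consecutive divisors of $\abs{G}$ (here, consecutive powers of $2$). The paper needs none of this. Your argument makes the paper self-contained in a few lines. Your sanity check with $f=\chi_z$ confirms that the product bound is attained. More generally, indicators of subspaces of codimension $\log_2 s$ show that the $1/s$ detection probability used in the certification steps is attained for every power of two $s$.
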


This fact follows from the classical uncertainty inequality for finite Abelian groups~\cite{meshulam2006uncertainty}. Applied to $G = (\ZZ_2)^n$, this gives $\abs{\mathrm{supp}(f)} \cdot \abs{\mathrm{supp}(\hat{f})} \geq 2^n$. Intuitively,~\cref{fact:boolean-uncertainty-principle} states that a nonzero function on the Boolean cube cannot be both sparse in the input domain and sparse in the Walsh domain. Thus, if $\hat{f}$ has support size $s$, then $f$ is a linear combination of only $s$ Walsh characters. This can result in cancellations depending on the coefficients, but~\cref{fact:boolean-uncertainty-principle} quantifies how extensive these cancellations can be: unless the combination is identically zero, it must be nonzero on at least $1/s$ of the Boolean cube. Thus, a uniformly random input detects a nonzero $s$-sparse Walsh spectrum with probability at least $1/s$. This is the primary consequence of the uncertainty principle that we use in our algorithm and allows us to determine whether a proposed singleton $z$ bit string is erroneous with inverse-polynomial probability.

With these ideas in hand, we note that dealing with the degenerate $x$ bit strings in the Pauli decomposition can be viewed as a sparse Walsh-Hadamard decoding problem on the Boolean cube. That is, for a fixed $x$, we have to recover a sparse set of $z$ indices and corresponding coefficients using oracle access to a function $b_x(u)$, which is a sum of a limited number of Walsh characters. The remainder of our paper specializes this idea to the setting of decomposing a matrix into the Pauli basis.

\section{Pauli decomposition algorithm}\label{sec:pauli-decomposition-algorithm}
The most general Pauli decomposition problem takes as input an arbitrary $2^n \times 2^n$ complex-valued matrix and asks for its Pauli basis expansion. In this work, we consider a problem in which we are \emph{promised} that the input matrix $M$ has polynomially-bounded support on the Pauli basis. We formalize this problem as follows:

\begin{problem}[Pauli-sparse decomposition problem]\label{prob:main-problem}
    Let $n \in \NN$ and $M \in \CC^{2^n \times 2^n}$. $M$ is promised to be exactly equal to a linear combination of $k$ Pauli strings:
    \begin{align}
        M = \defsum{i=1}{k}{\alpha^{(i)} P^{(i)}},\label{eqn:input-matrix}
    \end{align}
    where $\alpha^{(i)} \in \CC$ and $P^{(i)} \in \hat{\mathcal{P}}_n$ for all $i \in [k]$. Give an algorithm that returns all Pauli strings $P^{(i)}$ appearing in the decomposition along with their coefficients $\alpha^{(i)}$.
\end{problem}

To solve~\cref{prob:main-problem}, we assume \emph{sparse classical query access} to $M$:

\begin{definition}[Sparse query access model]\label{def:query-access-model}
    Let $M \in \CC^{2^n \times 2^n}$ be of the form in~\cref{eqn:input-matrix}. Then, since every Pauli string has exactly one nonzero entry in each row and column, each row and column of $M$ has at most $k$ nonzero entries. In the \emph{sparse query access model}, given a row index $v \in \{0,1\}^n$, a row query returns the list $\{(u, M_{v,u}) : M_{v,u} \neq 0\}$. When accounting for the query complexity of the algorithm, each such row retrieval counts as a single query.
\end{definition}

It is sufficient to assume either sparse row-query access or sparse column-query access. For concreteness, we state the algorithm using row queries for discovering the active $x$ bit strings and evaluations of the query function, defined as
\begin{equation}
    b_x(u) \coloneqq M_{x \oplus u, u},
\end{equation}
in terms of columns $u \in \{0,1\}^n$. Each evaluation of $b_x(u)$ can be implemented from a row query by querying row $x \oplus u$ and checking whether column $u$ appears in the returned sparse row. Equivalently, one could implement the same procedure using only column queries by querying column $u$. We discuss settings in which such a setup could arise in~\cref{sec:discussion}.

We claim that, if $k = \poly(n)$ and we are given query access to $M$ in the manner defined in~\cref{def:query-access-model}, then there exists an efficient randomized classical algorithm that can efficiently decompose $M$ into the Pauli basis:

\begin{theorem}[Pauli decomposition for sparse matrices]\label{thm:pauli-decomposition}
    For every $\delta \in (0,1)$, there exists a randomized classical algorithm that, given query access to $M \in \CC^{2^n \times 2^n}$ according to~\cref{def:query-access-model} and given a known sparsity $k$, returns the Pauli decomposition of $M$ with probability $1-\delta$. The algorithm uses a number of queries and has a total runtime complexity that are polynomial in $n$, $k$, and $\log(1/\delta)$.
\end{theorem}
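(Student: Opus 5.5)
We prove \cref{thm:pauli-decomposition} constructively. The algorithm has three stages: discovering the active $X$-parts, decoding each $X$-part, and analysing the cost with a union bound over failure events.

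\textbf{Step 1: the query function.} Write $M=\sum_{x}\sum_{z}\alpha_{x,z}P(x,z)$. By \cref{lemma:pauli-string-matrix-element}, only terms with $v=x\oplus u$ contribute to $M_{v,u}$. Hence
\[
b_x(u)=M_{x\oplus u,u}=\sum_{z}\alpha_{x,z}\,i^{\abs{x\wedge z}}(-1)^{z\cdot u}.
\]
So $b_x$ has Walsh spectrum $\hat b_x(z)=\alpha_{x,z}i^{\abs{x\wedge z}}$, with support size $p_x\le k$. Recovering $\hat b_x$ for every active $x$ recovers every coefficient $\alpha_{x,z}$, since the phase $i^{\abs{x\wedge z}}$ is known once $x$ and $z$ are.

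\textbf{Step 2: discovering $\mathcal X$.} I will query $O(k\log(k/\delta))$ uniformly random rows $v$. Each query returns the nonzero $u$'s, and I add $x=u\oplus v$ to a candidate set. For a fixed active $x$, the function $b_x$ is nonzero and $p_x$-sparse in the Walsh domain. By \cref{fact:boolean-uncertainty-principle}, a random $u$ (equivalently, a random row $v=x\oplus u$) gives $b_x(u)\neq0$ with probability at least $1/k$. The probability that $x$ is missed is therefore at most $(1-1/k)^{T}$. A union bound over the at most $k$ active $x$'s gives failure probability at most $\delta/3$. Every candidate produced is genuinely active, because a nonzero $M_{v,u}$ forces some term with $x=u\oplus v$.

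\textbf{Step 3: decoding each $x$.} For each $x\in\mathcal X$ I run the folding procedure of \cref{lemma:folding-identity}. I draw a uniformly random $R\in\{0,1\}^{m\times n}$ with $m=\lceil\log_2(ck^2)\rceil$. For pairwise distinct $z\neq z'$ we have $\Pr[Rz=Rz']=2^{-m}$, so every $z$ in the support lands in a singleton bin with constant probability. I compute the folded spectra $g^{(w)}(s)$ by evaluating $y^{(w)}(t)=b_x(R^\intercal t\oplus w)$ on all $2^m=\poly(k)$ values of $t$ and taking an $m$-bit Walsh–Hadamard transform. This is done for $w=0^n$ and $w=e_j$, $j\in[n]$, at a cost of $O(n\,2^m)$ queries.

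For each bin with $g(s)\neq0$, I read off a candidate $z$ from the ratios $g^{(e_j)}(s)/g(s)\in\{\pm1\}$. The candidate is rejected if some ratio is not $\pm1$ or if $Rz\neq s$. Its candidate coefficient is $g(s)$.

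Accepted candidates are collected over $O(\log(nk/\delta))$ independent folds. Then I \emph{verify}: I form the residual $r(u)=b_x(u)-\sum_{\text{cand}}c_z(-1)^{z\cdot u}$ and test it at random points. The residual's spectrum has support at most $k+(\#\text{cand})=\poly(k)$, so by \cref{fact:boolean-uncertainty-principle} any nonzero residual is detected with probability at least $1/\poly(k)$ per sample. $O(\poly(k)\log(k/\delta))$ samples then bound the probability of accepting a wrong decomposition by $\delta/3k$. The unique-$x$ case ($p_x=1$) is the special instance $m=0$: here $b_x(e_j)/b_x(0^n)=\pm1$ directly gives $z$.

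\textbf{Main obstacle.} The delicate step is controlling \emph{false singletons}. A colliding bin can, through cancellation, produce $\pm1$ ratios for every $j$ and so yield a spurious $z$ together with a wrong coefficient. Candidates from different folds can also disagree.

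I will handle this with a peeling/majority rule: keep only those $z$ that appear with the same coefficient in a majority of folds. A true $z$ is isolated in each fold with constant probability, so by a Chernoff bound it wins the majority after $O(\log(k/\delta))$ folds. A spurious $z$ would have to repeat consistently across independent $R$'s, which I argue is unlikely since its value depends on the particular colliding set. The final uncertainty-principle residual test is the safety net: if it fails, the algorithm repeats. This makes correctness unconditional and only the runtime random.

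Summing the three $\delta/3$ failure budgets yields success probability at least $1-\delta$. Queries and runtime are $\poly(n,k,\log(1/\delta))$, because each of the at most $k$ values of $x$ uses $\poly(n,k)\log(1/\delta)$ evaluations of $b_x$. Each evaluation costs one row query plus $O(k)$ scanning time, and the transforms run in $O(m2^m)$ time.
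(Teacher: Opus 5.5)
Your route is genuinely different from the paper's. You hash into $2^m\ge ck^2$ bins, use only the basis shifts $e_j$ plus the filter $R\hat z=s$, and rely on a majority vote over independent folds of $b_x$ followed by a global residual test. The paper instead uses only $B=\Theta(k)$ bins and certifies each candidate bin individually with $L$ uniformly random shifts, applying \cref{fact:boolean-uncertainty-principle} to $h(w)=g^{(w)}(s)-\hat\beta(-1)^{\hat z\cdot w}$. It also peels accepted terms so that later rounds see fewer elements.

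The true-element half of your vote is fine: your $m$ gives each true $z$ isolation probability at least $1-1/(ck)>1/2$. The spurious half, which you yourself flag as the main obstacle, is not justified as written. A spurious candidate is a \emph{deterministic} function of the colliding set and its coefficients. So it does not become unlikely to repeat ``because its value depends on the colliding set''; it recurs in every fold in which that same set collides again.

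Concretely, take three support elements with coefficients $\beta,\beta,-\beta$ and no coordinate with $z^{(1)}_j=z^{(2)}_j\neq z^{(3)}_j$ (e.g.\ $00,11,01$). Since $(-1)^a+(-1)^b-(-1)^c=(-1)^{a\oplus b\oplus c}$ unless $a=b\neq c$, any bin containing exactly these three has every ratio $g^{(e_j)}(s)/g(s)\in\{\pm1\}$. It therefore yields $\hat z=z^{(1)}\oplus z^{(2)}\oplus z^{(3)}$ with coefficient $\beta$, and $R\hat z=s$ holds automatically. Neither of your filters ever rejects it, whereas the paper's random shifts reject it with probability $1/4$ each.

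The missing ingredient is the birthday bound that your choice of $m$ already supports. A fold has any collision at all with probability at most $\frac{p_x(p_x-1)}{2}\,2^{-m}<1/(2c)$. In a collision-free fold every bin with $g(s)\neq0$ is a genuine singleton, so that fold outputs exactly the true pairs $(z,\beta_x(z))$ and nothing else. Take $c\ge2$ and $F=O(\log(k/\delta))$ independent folds of $b_x$ itself, not of a peeled residual. A Hoeffding bound then makes collision-free folds a strict majority with probability at least $1-\delta/(3k)$. Every true pair then wins the vote, and any pair that wins appears in some collision-free fold and is therefore correct. Even simpler, residual-test each fold's output and keep the first that passes.

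This argument is also exactly what your runtime claim needs; without it nothing bounds the number of restarts. Correctness is also not ``unconditional'': the residual test has one-sided error. You should cap the restarts, output \texttt{Fail} afterward, and charge the test's miss probability to the failure budget, as the paper does with its final residual check.

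With these repairs your proof is a valid alternative that dispenses with per-bin certification and peeling. The price is $\Theta(nk^2)$ queries per fold instead of the paper's $\Theta((n+L)k)$. That gives roughly $O(nk^3\log(k/\delta))$ queries overall, versus the paper's $O(nk^2\log(k/\delta)+k^3\log^2(k/\delta))$.
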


We note that the algorithm can fail in two different ways: it can fail to pass one of the several certification steps that we have throughout the algorithm (in which case the algorithm returns \texttt{Fail}), but it can also "silently" fail by returning an incorrect decomposition without knowing that this decomposition is incorrect. We discuss the failure mechanisms of the algorithm in more detail in~\cref{app-sec:correctness-error-analysis}.

We also note that the algorithm that we present does not require $k$ to be known exactly. It is sufficient to be given an explicit upper bound $k_{\max}$ on the support size, with $k_{\max} \geq k$. All parameters for sampling, certification, etc. in the algorithm can then be chosen using $k_{\max}$ in place of $k$, and the resulting query and runtime bounds are polynomial in $n$, $k_{\max}$, and $\log(1/\delta)$. However, for simplicity, we assume that $k$ is known and so we write our analysis in terms of $k$ throughout.

We now describe the algorithm. Note, the goal is to find every Pauli string in the decomposition of $M$, which is equivalent to finding the triples $(x^{(i)}, z^{(i)}, \alpha^{(i)})$ for all terms in the sum, when writing each Pauli string $P^{(i)}$ in binary symplectic form (see~\cref{def:binary-symplectic-form}). For a fixed $x \in \{0,1\}^n$, denote by $\mathcal{P}_x$ the set of Pauli strings whose Pauli-$X$-part is $x$, and define $p_x \coloneqq \abs{\mathcal{P}_x}$. Thus, $p_x$ is the number of Pauli strings sharing the same Pauli-$X$ part; if $p_x=1$, then we say that $x$ is \emph{unique}, otherwise, if $p_x \geq 2$, then we say that $x$ is \emph{degenerate}.

Furthermore, denote by $\alpha_{x,z}$ the coefficient of the Pauli string labeled by $(x \mid z)$, which is zero if no such Pauli string appears in the decomposition. Using~\cref{lemma:pauli-string-matrix-element} and the phase convention in~\cref{def:binary-symplectic-form}, we can absorb the fixed phase from the sites in which $x_j = z_j = 1$ into a modified coefficient
\begin{equation}
    \beta_x(z) \coloneqq i^{\abs{x \wedge z}}\alpha_{x,z},\label{eqn:alpha-beta-relation}
\end{equation}
where, recall,
\begin{equation}
    \abs{x \wedge z} = \defsum{j=1}{n}{x_jz_j}.
\end{equation}
It is the effect of $\beta_x(z)$ that we see when we query the matrix, and given $x$ and $z$, we can uncompute to find the true coefficient $\alpha_{x,z}$ in the decomposition; that is,
\begin{equation}
    \alpha_{x,z} = i^{-\abs{x \wedge z}}\beta_x(z).
\end{equation}
By~\cref{lemma:row-column-pauli-strings}, every Pauli string with Pauli-$X$ part described by $x$ contributes to the matrix element in row $v = x \oplus u$ and column $u$, and by~\cref{lemma:pauli-string-matrix-element}, the $u$-dependence of that contribution is exactly the Walsh character $(-1)^{z \cdot u}$, which we described in~\cref{subsec:walsh-hadamard-analysis}. Thus, the query function is equivalent to
\begin{equation}
    b_x(u) = \defsum{z\in\{0,1\}^n}{}{\beta_x(z)(-1)^{z \cdot u}}.\label{eqn:bx-query}
\end{equation}
In other words, for a fixed $x$, the function $b_x$ has Walsh-Hadamard expansion with coefficients $\beta_x$, and since $\beta_x$ has support size $p_x$, the function $b_x$ is $p_x$-sparse in the Walsh domain.

At this point, a straightforward approach to pursue is, for a fixed $x$, evaluate $b_x(u)$ for all $u \in \{0,1\}^n$, which requires $2^n$ queries, and then recover $\beta_x$ using a full Walsh-Hadamard transform in time $\bigOh{n2^n}$. This recovers all $z$-parts and the coefficients for the given $x$, but the runtime complexity scales exponentially in $n$, which is exactly what we wish to avoid. Our algorithm instead exploits the $p_x$-sparsity of $\beta_x$.

First, we determine which $x$ bit strings appear in the decomposition at all. By~\cref{lemma:row-column-pauli-strings}, if we query the first row, where $v = 0^n$, then every nonzero entry in that row occurs at a column $u$ such that
\begin{equation}
    x \oplus u \oplus v = 0^n,
\end{equation}
where $x$ is the Pauli-$X$ part of a Pauli string with a nonzero coefficient. If $v = 0^n$, this immediately becomes $x = u$. Thus, a single query to row $0^n$ returns the distinct $x$ bit strings that appear in the Pauli decomposition of $M$, and we denote this set by
\begin{equation}
    \mathcal{X} \subseteq \{0,1\}^n.
\end{equation}
The size of this set is $\abs{\mathcal{X}} \leq k$, which tells us the number of distinct $x$ bit strings, and it is bounded above by the number of Pauli strings, that we are promised are in the decomposition, $k$. While this single query to the first row would suffice for many cases, there are situations in which the contributions from the support elements with Pauli-$X$ part $x$ exactly cancel in the first row. To see an example of this, consider a simple one-qubit instance with $M = X - iY$. Both terms have a Pauli-$X$ part described by $x=1$, but in the first row ($v=0$), the contributions $X_{0,1}=1$ and $(-iY)_{0,1} = -1$ exactly cancel to 0, meaning our query would tell us there are no Pauli strings with $x=1$.

Instead, we can discover the $x$ bit strings using several row queries chosen uniformly at random. To see why this works, we fix an $x \in \{0,1\}^n$ that is truly present in the decomposition. Then, for a row index $v \in \{0,1\}^n$, we consider the matrix element at column $u = x \oplus v$. Using~\cref{eqn:bx-query}, this element is given by $M_{v,u} = b_x(x \oplus v)$. Since we choose $v$ uniformly at random, $u = x \oplus v$ is also uniformly random, and since $b_x$ has Walsh-support size $p_x$, by~\cref{fact:boolean-uncertainty-principle}, we have
\begin{align}
    \Pr_{v\sim\{0,1\}^n}[M_{v,u} \neq 0] &= \Pr_{u\sim\{0,1\}^n}[b_x(u) \neq 0]\\
    &\geq \frac{1}{p_x} \geq \frac{1}{k}.
\end{align}
Thus, each uniformly random row reveals any fixed active $x$ bit string with probability at least $1/k$. Using this, we have the following $x$ discovery procedure. First, we choose a failure budget $\eta_{\mathcal{X}} \in (0,1)$ and sample
\begin{equation}
    R_{\mathcal{X}} \coloneqq \ceil*{k\log\lp \frac{k}{\eta_{\mathcal{X}}} \rp}\label{eqn:Rx-value}
\end{equation}
independent uniformly sampled rows $v^{(1)}, \ldots, v^{(R_{\mathcal{X}})} \in \{0,1\}^n$. For each row $v^{(r)}$, take every returned nonzero entry's corresponding column index $u$ and add the bit string $x = u \oplus v^{(r)}$ to the set of $x$ bit strings $\mathcal{X}$. For any $x$ bit string that is truly in the decomposition, the probability that it is missed in all $R_{\mathcal{X}}$ rows that we query is at most
\begin{equation}
    \lp 1 - \frac{1}{k} \rp^{R_{\mathcal{X}}} \leq e^{-R_{\mathcal{X}}/k} \leq \frac{\eta_{\mathcal{X}}}{k},
\end{equation}
where we used the relation $1-a \leq e^{-a}$ and the expression for $R_{\mathcal{X}}$ in~\cref{eqn:Rx-value}. As there are at most $k$ active $x$ bit strings in the decomposition (i.e., the worst case for the count is when
all active x bit strings are unique), a union bound shows that every $x$ bit string is discovered with probability at least $1 - \eta_{\mathcal{X}}$.

Once we have the set $\mathcal{X}$ in hand, the remaining task is to recover, for each $x \in \mathcal{X}$, the corresponding $z$ bit string(s) and coefficient(s). We split this up into two sub-problems: dealing with the unique $x$ bit strings, for which $p_x = 1$, and dealing with the degenerate $x$ bit strings, for which $p_x \geq 2$. We summarize the full procedure in~\cref{alg:full-algo}. We start with decoding the unique $x$ bit strings.

\begin{figure}[tb]
    \begin{algorithm}[H]
        \caption{Pauli decomposition algorithm}\label{alg:full-algo}
        \begin{algorithmic}[1]
            \Procedure{\textsc{PauliDecomposition}}{$M,n,k,\delta$}
                \Require Query access to $M$, number of qubits $n$, sparsity $k$, failure probability $\delta$.
                \Ensure Set $\mathcal{D}$ of pairs $(P(x,z), \alpha_{x,z})$ or \texttt{Fail}
                \State Choose failure budgets
                \begin{equation}
                    \eta_{\mathcal{X}}=\frac{\delta}{3},\quad \eta_x=\frac{\delta}{3k},\quad \delta_x=\frac{\delta}{3k}
                \end{equation}
                \State Set $R_{\mathcal{X}} = \ceil*{k\log{\parens*{\frac{k}{\eta_{\mathcal{X}}}}}}$
                \State Initialize $\mathcal{X}, \mathcal{X}_{\mathrm{deg}}, \mathcal{D} \gets \emptyset, \emptyset, \emptyset$
                \For{$r = 1, \ldots, R_{\mathcal{X}}$}
                    \State sample $v^{(r)} \in \{0,1\}^n$ uniformly at random
                    \State query row $v^{(r)}$ of $M$
                    \State for each nonzero entry $(u,M_{v^{(r)},u})$, calculate $x = u \oplus v^{(r)}$, append to $\mathcal{X}$
                \EndFor
                \For{$x \in \mathcal{X}$}
                    \State run $\textsc{Unique}(x,\eta_x)$
                    \If{returns $(\mathtt{Unique},\hat{z},\beta)$}
                        \State set $\alpha_{x,z} = i^{-\abs{x \wedge \hat{z}}}\beta$, add $(P(x,\hat{z}), \alpha_{x,\hat{z}})$ to $\mathcal{D}$
                    \Else
                        \State add $x$ to $\mathcal{X}_{\mathrm{deg}}$
                    \EndIf
                \EndFor
                \For{$x \in \mathcal{X}_{\mathrm{deg}}$}
                    \State run $\textsc{Degenerate}(x, p^{\max}_x, \delta_x)$
                    \If{\texttt{Fail}}
                    \State \textbf{return} \texttt{Fail}
                    \EndIf
                    \For{each $(z,\beta_x(z))$ returned}
                        \State $\alpha_{x,z} = i^{-\abs{x \wedge z}}\beta_x(z)$
                        \State add $(P(x,z), \alpha_{x,z})$ to $\mathcal{D}$
                    \EndFor
                \EndFor
                \State \textbf{return} $\mathcal{D}$
            \EndProcedure
        \end{algorithmic}
    \end{algorithm}
\end{figure}

\subsection{Determining and decoding unique \texorpdfstring{$x$}{x} bit strings}\label{subsec:unique-x-bit-strings}
\begin{figure}[tb]
    \begin{algorithm}[H]
        \caption{Decode unique $x$ bit strings}\label{alg:unique-x-bit-strings}
        \begin{algorithmic}[1]
            \Procedure{Unique}{$x$, $\eta_x$}
                \Require Fixed $x \in \mathcal{X}$, failure budget $\eta_x$
                \Ensure $(\mathtt{Unique}, \hat{z}, \beta)$ or \texttt{Degenerate}
                \State Set $L_{\mathrm{cert}} \gets \ceil*{(k+1)\log(1/\eta_x)}$
                \State Evaluate $\beta = b_x(0^n)$
                \If{$\beta=0$}
                    \State \textbf{return} \texttt{Degenerate}
                \EndIf
                \State Initialize $\hat{z} \gets 0^n$
                \For{$j = 1, \ldots, n$}
                    \State evaluate $b_x(e_j)$
                    \State compute $r_j = \frac{b_x(e_j)}{\beta}$
                    \If{$r_j = +1$} set $\hat{z}_j=0$
                    \ElsIf{$r_j=-1$} set $\hat{z}_j=1$
                    \Else \textbf{ return} \texttt{Degenerate}
                    \EndIf
                \EndFor
                \For{$\ell = 1, \ldots, L_{\mathrm{cert}}$}
                    \State uniformly sample $u^{(\ell)} \in \{0,1\}^n$
                    \If{$b_x(u^{(\ell)}) \neq \beta \cdot (-1)^{\hat{z} \cdot u^{(\ell)}}$}
                        \State \textbf{return} \texttt{Degenerate}
                    \EndIf
                \EndFor
                \State \textbf{return} $(\mathtt{Unique}, \hat{z}, \beta)$
            \EndProcedure
        \end{algorithmic}
    \end{algorithm}
\end{figure}

With the set of $x$ bit strings $\mathcal{X}$, we now fix an $x \in \mathcal{X}$. Recall that
\begin{equation}
    b_x(u) = \defsum{z \in \{0,1\}^n}{}{\beta_x(z)(-1)^{z \cdot u}},
\end{equation}
where, recall, evaluating $b_x(u)$ is implemented by querying column $u$ and reading off the entry in row $x \oplus u$, which gives us matrix element $M_{x \oplus u, u}$. If $x$ is unique (i.e., $p_x=1$), then there exists a bit string $z^\star \in \{0,1\}^n$ such that the sum collapses to
\begin{equation}
    b_x(u) = \beta_x(z^\star)(-1)^{z^\star \cdot u}
\end{equation}
for all $u \in \{0,1\}^n$. That is, for a unique $x$, the function $b_x$ is a single Walsh character, and both the coefficient and the $z$ bit string can be recovered from a small number of additional queries. In particular, if we take $u = 0^n$ and evaluate $b_x(0^n)$, then we have
\begin{equation}
    b_x(0^n) = \beta_x(z^\star),
\end{equation}
which gives us the coefficient up to a phase, allowing us to exactly recover the true coefficient $\alpha_{x,z}$ using the relationship defined in~\cref{eqn:alpha-beta-relation}. Furthermore, if we evaluate $b_x(u)$ at the standard basis vectors $e_j$, then we can take the ratio
\begin{equation}
    \frac{b_x(e_j)}{b_x(0^n)} = (-1)^{z^\star_j}
\end{equation}
for each $j \in [n]$. Thus, if $x$ is unique, then these $n$ queries that we make (one for each $e_j$) determine the bits of $z^\star$ one-by-one. That is, for each $j$, the ratio $b_x(e_j)/b_x(0^n)$ must be equal to $+1$ or $-1$, corresponding to $z^\star_j = 0$ or $z^\star_j = 1$, respectively. If any of these ratios fail to lie in $\{\pm 1\}$, then that $x$ bit string cannot be unique and we declare $x$ degenerate.

While these queries give a candidate unique $x$ bit string, they do not by themselves rule out the possibility that a degenerate $x$ bit string agrees with a single Walsh character on the queried points $0^n, e_1, \ldots, e_n$. To certify whether a candidate unique $x$ bit string is truly unique, after constructing the corresponding candidate $\hat{z}$ bit string and coefficient $\beta = b_x(0^n)$, we evaluate $b_x(u)$ on a small number $L_{\mathrm{cert}}$ (to be chosen below) of uniformly random inputs $u^{(1)}, \ldots, u^{(L_{\mathrm{cert}})}$ and test whether
\begin{equation}
    b_x(u^{(\ell)}) = \beta \cdot (-1)^{\hat{z} \cdot u^{(\ell)}}
\end{equation}
holds for all $\ell \in [L_{\mathrm{cert}}]$. If every certification query passes, we accept $x$ as unique and conclude that $\hat{z}$ is in fact the correct $Z$-part and that $\beta = \beta_x(\hat{z})$ is the correct coefficient. The original Pauli coefficient is then found using~\cref{eqn:alpha-beta-relation}, and the corresponding Pauli string is 
\begin{equation}
    P(x,\hat{z}) = (-i)^{\abs{x \wedge \hat{z}}}\bigotimes_{j=1}^{n}{Z^{\hat{z}_j}X^{x_j}}.
\end{equation}
If, on the other hand, any of the certification queries fail, then we conclude that $x$ is degenerate, and we decode it using the degenerate $x$ bit string algorithm that we develop in the next subsection.

The reason that this certification step is effective is that, if we have a candidate pair $(\hat{z}, \beta)$ that is incorrect, then the discrepancy
\begin{equation}
    h_x(u) \coloneqq b_x(u) - \beta(-1)^{\hat{z} \cdot u}
\end{equation}
is a nonzero Walsh-sparse function. Its Walsh support has size at most $p_x+1 \leq k+1$, so by~\cref{fact:boolean-uncertainty-principle}, a uniformly random query detects the error with probability at least $1/(k+1)$. Thus, if we choose
\begin{equation}
    L_{\mathrm{cert}} = \ceil{(k+1)\log(1/\eta_x)},
\end{equation}
then we ensure that an incorrect candidate pair $(\hat{z}, \beta)$ passes the checks with probability at most $\eta_x$. We prove this statement in~\cref{app-sec:correctness-error-analysis}. We summarize this whole procedure in~\cref{alg:unique-x-bit-strings}.

\subsection{Decoding degenerate \texorpdfstring{$x$}{x} bit strings}\label{subsec:degenerate-x-algorithm}
We now consider the \emph{degenerate} $x$ bit strings, that is, those $x \in \{0,1\}^n$ for which $p_x \geq 2$. If we fix an $x \in \mathcal{X}$, then recall from~\cref{eqn:bx-query} that
\begin{equation}
    b_x(u) = \defsum{z\in\{0,1\}^n}{}{\beta_x(z)(-1)^{z \cdot u}},
\end{equation}
where the Walsh support of the function $b_x$ is of size $p_x$. Thus, recovering the Pauli strings with Pauli-$X$ part described by the fixed $x \in \mathcal{X}$ is equivalent to recovering the support and coefficients of the sparse Walsh-Hadamard spectrum $\beta_x$. That is, the degenerate $x$ bit string problem is a sparse Walsh-Hadamard problem on the Boolean cube, the machinery for which we introduced in~\cref{subsec:walsh-hadamard-analysis}.

\begin{figure}[tb]
    \begin{algorithm}[H]
        \caption{Decode degenerate $x$ bit strings}\label{alg:determine-degenerate-x-pauli-string}
        \begin{algorithmic}[1]
            \footnotesize
            \Procedure{\textsc{Degenerate}}{$x$, $p^{\max}_x$, $\delta_x$}
                \Require Degenerate $x \in \mathcal{X}_{\mathrm{deg}}$, bound $p^{\max}_x \geq p_x$, failure budget $\delta_x$
                \Ensure Set $\mathcal{S} = \{(z,\beta_x(z))\}$ or \texttt{Fail}
                \State Choose $c > 1$
                \State $m \gets \ceil*{\log(cp^{\max}_x)}$
                \State $B \gets 2^m$
                \State $T \gets \ceil*{\log_c{\parens*{\frac{3p^{max}_x}{\delta_x}}}}$
                \State $L \gets \ceil*{(p^{\max}_x+1)\log{\parens*{\frac{3BT}{\delta_x}}}}$
                \State $q \gets \ceil*{p^{\max}_x\log{\parens*{\frac{3}{\delta_x}}}}$
                \State Initialize $\mathcal{S} \gets \emptyset$
                \For{$\tau = 1, \ldots, T$}
                    \State sample $R \in \{0,1\}^{m \times n}$ uniformly at random
                    \State compute unshifted spectrum $g = \mathrm{Fold}(x,\mathcal{S},R,0^n)$
                    \For{$j = 1, \ldots, n$}
                        \State compute $g^{(e_j)} = \mathrm{Fold}(x,\mathcal{S},R,e_j)$
                    \EndFor
                    \State sample random shifts $w^{(1)}, \ldots, w^{(L)} \in \{0,1\}^n$
                    \For{$\ell = 1, \ldots, L$}
                        \State compute $g(w^{(\ell)}) = \mathrm{Fold}(x,\mathcal{S},R,w^{(\ell)})$
                    \EndFor
                \For{$s \in \{0,1\}^m$ with $g(s) \neq 0$}
                    \State initialize $\hat{z} \gets 0^n$
                    \For{$j = 1, \ldots, n$}
                        \State compute $r_j = \frac{g^{(e_j)}(s)}{g(s)}$
                        \If{$r_j = +1$} set $\hat{z}_j = 0$
                        \ElsIf{$r_j = -1$} set $\hat{z}_j = 1$
                        \Else \ reject bin and move onto next one
                        \EndIf
                    \EndFor
                    \If{$R\hat{z} \neq s$} reject bin and move onto next one
                    \EndIf
                    \State accept bin if $\frac{g^{(w^{(\ell)})}(s)}{g(s)} = (-1)^{\hat{z} \cdot w^{(\ell)}}$ for all $\ell \in [L]$
                    \If{candidate passes and $\hat{z} \notin \mathcal{S}$}
                        \State add $(\hat{z}, g(s))$ to $\mathcal{S}$
                    \EndIf
                \EndFor
            \EndFor
                \State After $T$ rounds, uniformly sample $u^{(1)}, \ldots, u^{(q)} \in \{0,1\}^n$
                \If{$\Tilde{b}_x(u^{(\ell)}) = 0$ for all $\ell \in [q]$}
                    \State \textbf{return} $\mathcal{S}$
                \Else
                    \State \textbf{return} \texttt{Fail}
                \EndIf
            \EndProcedure
        \end{algorithmic}
    \end{algorithm}
\end{figure}

The intuition behind this goes as follows. For a fixed $x \in \{0,1\}^n$, the function $u \mapsto b_x(u)$ is a sparse sum of Boolean characters $(-1)^{z \cdot u}$. The problem of dealing with the degenerate $x$ bit strings is then just a sparse Walsh-Hadamard inversion problem over the Boolean cube. The main idea is that we \emph{fold} the $n$-bit spectrum down to an $m$-bit spectrum, where $m \ll n$. This coarse-graining process forces the unknown $z$ bit strings into $2^m$ bins, and if a given bin contains only one true support element, then the corresponding coefficient can be read off directly. Furthermore, shifts applied to the input variable $u$ produce known phase changes in the Walsh-Hadamard domain, and this allows us to recover the bits of the single $z$ bit string in the isolated bin. We repeat this random folding process and subtract the terms that we have already recovered until we recover all of the $z$ bit strings and their associated coefficients with high probability. In essence, this process allows us to probe the Walsh-Hadamard spectrum of the query function $b_x$, rather than evaluating the full $2^n$-length spectrum, and thus decode the Pauli-$Z$ support associated with a fixed Pauli-$X$ part $x$.

To turn this intuition into an algorithm, we choose the folding dimension and repetition parameters as follows. First, we note that  we do not need to know the degeneracy $p_x$ exactly, and it is unlikely we would know this \emph{a priori} anyways. It is instead sufficient to know an upper bound
\begin{equation}
    p_x^{\max} \geq p_x,
\end{equation}
and a trivial choice\footnote{In practice, we can adaptively change $p_x^{\max}$ based on how many different $x$ bit strings we find initially, how many we have decoded so far, etc., but for simplicity of analysis, the trivial choice suffices.} is $p_x^{\max} = k$, where we assume in the statement of~\cref{prob:main-problem} that $k$ is known. Then, we choose a constant $c > 1$ and take
\begin{equation}
    m \coloneqq \ceil{\log(cp_x^{\max})},\quad B \coloneqq 2^m,
\end{equation}
such that the folded spectrum ends up having $B = 2^m = \bigTheta{p_x^{\max}}$ bins. We need to repeat the folding process over several \emph{rounds}, where we take the number of rounds to be
\begin{equation}
    T \coloneqq \ceil*{\log_c(3p_x^{\max}/\delta_x)},
\end{equation}
where $\delta_x \in (0,1)$ is the acceptable probability of failing to decode a fixed $x$ bit string. We show in~\cref{app-sec:correctness-error-analysis} how we choose $\delta_x$. Furthermore, we need to make a number of checks to ensure that we have successfully isolated a candidate singleton support element. We take this number to be
\begin{equation}
    L \coloneqq \ceil{(p_x^{\max}+1)\log(3BT/\delta_x)}.\label{eqn:number-random-shifts}
\end{equation}
These are the parameters that go into~\cref{alg:determine-degenerate-x-pauli-string} and we derive them in~\cref{app-sec:correctness-error-analysis}.

Throughout the algorithm, we maintain a set $\mathcal{S}$ of pairs $(z,\beta_x(z))$ that we have successfully recovered for the fixed $x$, and we define the \emph{residual oracle} as
\begin{equation}
    \Tilde{b}_x(u) \coloneqq b_x(u) - \defsum{(z,\beta_x(z))\in\mathcal{S}}{}{\beta_x(z)(-1)^{z \cdot u}}.
\end{equation}
At the beginning of a round, $\Tilde{b}_x$ contains the contributions of the currently undiscovered support elements. To isolate these elements, we first sample a random binary matrix $R \in \{0,1\}^{m \times n}$, and for each $t \in \{0,1\}^m$, we query
\begin{equation}
    y(t) \coloneqq \Tilde{b}_x(R^\intercal t),
\end{equation}
which, by~\cref{lemma:folding-identity}, has Walsh-Hadamard spectrum
\begin{equation}
    g(s) \coloneqq \defsum{z:Rz=s}{}{\Tilde{\beta}_x(z)},\quad s \in \{0,1\}^m,
\end{equation}
where
\begin{equation}
    \Tilde{\beta}_x(z) \coloneqq \beta_x(z) - \defsum{(z',\beta)\in\mathcal{S}}{}{\beta\vb{1}_{z=z'}}
\end{equation}
is the residual Walsh-Hadamard spectrum (i.e., the spectrum remaining after subtracting off the already-found support elements). Thus, with $B = 2^m$ queries, we can recover all bin values $g(s)$ by computing the Walsh-Hadamard spectrum of $y$ on $\{0,1\}^m$, which can be done efficiently as $m$ is logarithmic in $n$. We denote this whole procedure by
\begin{equation}
    \mathrm{Fold}(x,\mathcal{S},R,w)(s) = \defsum{z:Rz=s}{}{\Tilde{\beta}_x(z)(-1)^{z \cdot w}}\label{eqn:fold-function}
\end{equation}
for convenience, which we use as a short-hand in~\cref{alg:determine-degenerate-x-pauli-string}. So far, we have only considered unshifted queries, that is, $w = 0^n$, but $\mathrm{Fold}(x, \mathcal{S}, R, w)$ allows for any $w \in \{0,1\}^n$.

\begin{figure*}[tb]
    \centering
    \includegraphics[scale=0.6]{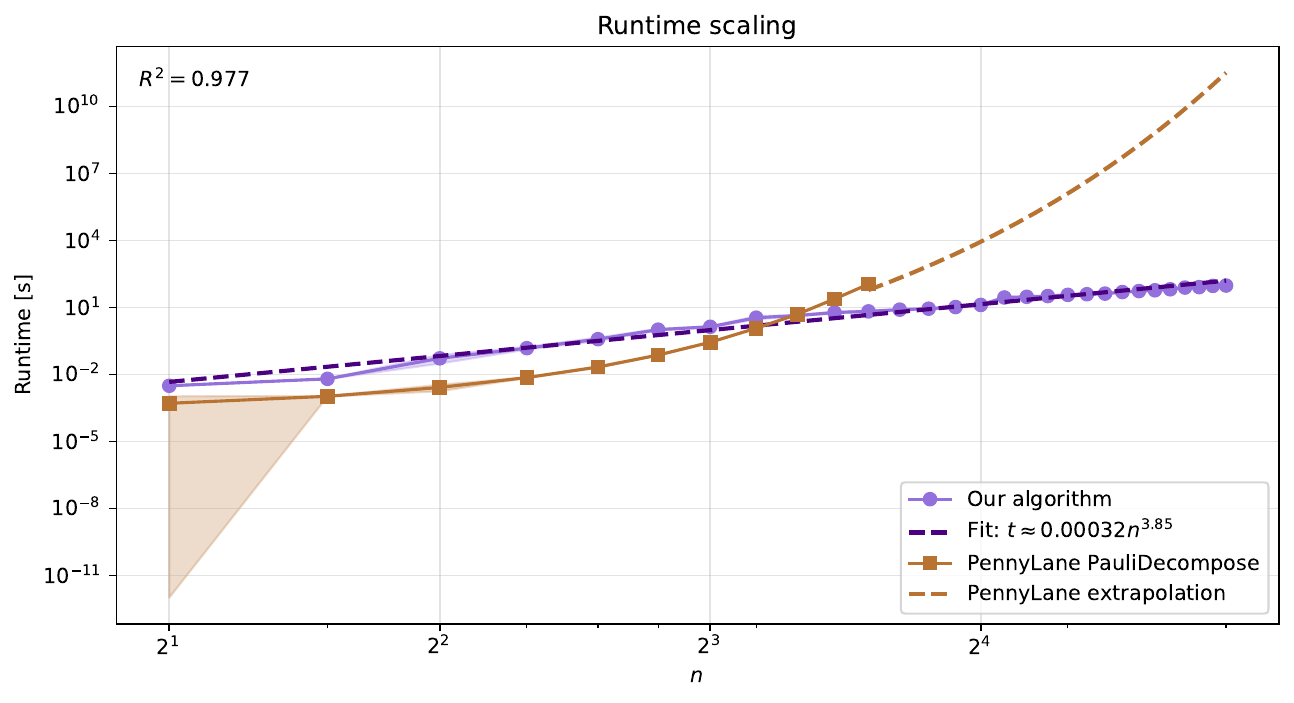}
    \caption{Runtime of our algorithm for 5 random instances each for values $n = 2, \ldots, 12$ and $k = 2n$ compared to PennyLane's \texttt{PauliDecompose} algorithm. Our algorithm is then applied to larger values of $n = 13, \ldots, 30$, while the PennyLane runtime was extrapolated using a fit. The observed scaling is consistent with the polynomial query and runtime bounds proved in~\cref{app-sec:correctness-error-analysis,app-sec:query-complexity,app-sec:time-complexity}.}
    \label{fig:runtime}
\end{figure*}

If a bin $s$ ends up containing exactly one support element $z^\star$ (i.e., a singleton), then we have
\begin{equation}
    g(s) = \Tilde{\beta}_x(z^\star) = \beta_x(z^\star),
\end{equation}
and so the coefficient can be read off immediately. Then, to find the bit string $z^\star$, we use a series of \emph{shifted queries}. That is, for any shift $w \in \{0,1\}^n$, we define
\begin{equation}
    y^{(w)}(t) \coloneqq \Tilde{b}_x(R^\intercal t \oplus w),
\end{equation}
and we denote by $g^{(w)}$ its folded spectrum. By~\cref{lemma:folding-identity}, we have
\begin{equation}
    g^{(w)}(s) = \defsum{z:Rz=s}{}{\Tilde{\beta}_x(z)(-1)^{z \cdot w}}.
\end{equation}
If $s$ is a singleton and contains the bit string $z^\star$, then
\begin{equation}
    \frac{g^{(w)}(s)}{g(s)} = (-1)^{z^\star \cdot w}.
\end{equation}
If we choose $w = e_j$, where $e_j$ is the $j\numth$ standard basis vector, then we have
\begin{equation}
    \frac{g^{(e_j)}(s)}{g(s)} = (-1)^{z^\star_j},
\end{equation}
which allows us to read off the full bit string $z^\star$ bit-by-bit.

We note that a collision bin (i.e., one that is not a singleton bin) can sometimes appear as a singleton bin on the basis-shift tests alone. Thus, we can check that each candidate singleton is truly a singleton by using a few additional random shifts $w^{(1)}, \ldots, w^{(L)} \in \{0,1\}^n$, and we accept a candidate $\hat{z}$ only if
\begin{equation}
    \frac{g^{(w^{(\ell)})}(s)}{g(s)} = (-1)^{\hat{z} \cdot w^{(\ell)}}
\end{equation}
holds true for all $\ell \in [L]$. If the candidate is not correct, then the discrepancy is a nonzero Walsh-sparse function with support size at most $p_x+1 \leq p_x^{\max}+1$, and, by~\cref{fact:boolean-uncertainty-principle}, each random shift detects the error with probability at least $1/(p_x^{\max}+1)$. With the choice of $L$ in~\cref{eqn:number-random-shifts}, a union bound over all $BT$ bins considered during the fixed-$x$ run implies that the probability that any false singleton is accepted is at most $\delta_x/3$. We prove this in~\cref{app-sec:correctness-error-analysis}.

Once we accept a term, we add $(\hat{z},g(s))$ to a set $\mathcal{S}$ and we \emph{peel} it from each subsequent query, and we repeat the above process with a fresh random fold. By choosing $B = \bigTheta{p_x^{\max}}$ bins and $T = \bigOh{\log(3p_x^{\max}/\delta_x)}$ rounds, we isolate every true support element for the fixed $x$ bit string with failure probability at most $\delta_x/3$. At the end, we perform a final randomized verification step by querying $\Tilde{b}_x(u)$ on $q=\bigOh{p_x^{\max}\log(3/\delta_x)}$ random inputs $u$. If every test returns zero (because we are subtracting all of the correctly recovered support elements), then we accept the recovered set $\mathcal{S}$; otherwise, we declare the procedure a failure. Combining the isolation guarantee, random-shift certification for false singletons, and the final residual check give a total failure probability of at most $\delta_x$, meaning we successfully decode a degenerate $x$ bit string with probability at least $1-\delta_x$. We summarize this whole procedure in~\cref{alg:determine-degenerate-x-pauli-string}.

In ~\cref{app-sec:correctness-error-analysis}, we prove that the whole algorithm, outlined in~\cref{alg:full-algo}, succeeds with probability at least $1-\delta$, where we choose the failure budgets such that
\begin{equation}
    \eta_{\mathcal{X}} + \defsum{x\in\mathcal{X}}{}{\eta_x} + \defsum{x\in\mathcal{X}_{\mathrm{deg}}}{}{\delta_x} \leq \delta.
\end{equation}
In particular, we choose
\begin{equation}
    \eta_{\mathcal{X}} = \frac{\delta}{3}, \quad
    \eta_x = \frac{\delta}{3k}, \quad
    \delta_x = \frac{\delta}{3k}.
\end{equation}
As we discuss further in~\cref{app-sec:correctness-error-analysis}, the algorithm can fail in two ways: it can fail and output \texttt{Fail} or it can produce an incorrect decomposition (i.e., failing without outputting \texttt{Fail}). Our error analysis encompasses both cases. On the complexity side, our algorithm requires only polynomially many queries in terms of $n$, $k$, and $\log(1/\delta)$ and has runtime that scales polynomially in $n$, $k$, and $\log(1/\delta)$. We prove both of these scalings in~\cref{app-sec:query-complexity,app-sec:time-complexity}, respectively, and present an overall guarantee for the algorithm in~\cref{app-sec:overall-guarantee}.

\section{Numerical experiments}\label{sec:numerics}
We now present numerical experiments deploying our algorithm on several random instances of~\cref{prob:main-problem}. These experiments incorporate the full algorithm as we developed it in~\cref{sec:pauli-decomposition-algorithm}: randomized $x$ bit string-support discovery, determining the uniqueness of the discovered $x$ bit strings, decoding the degenerate $x$ bit strings, and certification of results.

For each value of $n = 2, \ldots, 30$, we instantiate 5 random matrices $M$ of the form
\begin{equation}
    M = \defsum{i=1}{k}{\alpha^{(i)} P(x^{(i)},z^{(i)})},
\end{equation}
where the $\alpha^{(i)} \in \CC$ are chosen randomly and where the $P(x^{(i)},z^{(i)}) \in \hat{\mathcal{P}}_n$ are built by sampling uniformly at random $x^{(i)}, z^{(i)} \in \{0,1\}^n$, though we ensure\footnote{We ensure this by choosing some number $N_x$ of active $x$ bit strings we want in such a way that $N_x < k$. Since we have to assign $k$ $z$ bit strings to $N_x < k$ different $x$ bit strings, by the pigeonhole principle, some $x$ bit strings are forced to be degenerate.} that there is a subset of $x$ bit strings that are degenerate so as to make use of the subroutine in~\cref{subsec:degenerate-x-algorithm}. In our experiments, we choose the parameters $c=8$, $\delta=0.1$, and $k=2n$. All numerical experiments were run on an Intel Core i9-10900k CPU at 3.70 GHz with 10 physical cores
(20 logical threads) and 32 GB RAM, running Windows 10. The code was run
with Python v. 3.14.0, NumPy v. 2.3.5, PennyLane
v. 0.43.1, pandas v. 2.3.3, and matplotlib v. 3.10.7. The script used for these numerical experiments is publicly available at \url{https://github.com/dspencer2596/sparse-pauli-decomposition/releases/tag/v1.0-paper}.

In~\cref{fig:runtime}, we compare the runtime of our algorithm with the \texttt{PauliDecompose} algorithm in PennyLane~\cite{bergholm2018pennylane}, using values of $n = 2, \ldots, 12$ and building the matrix explicitly in memory. We pass the matrix, stored in classical memory in a sparse data structure (such that we only have to keep track of $\bigOh{k2^n}$ entries rather than $4^n$), to the \texttt{PauliDecompose} routine, and to our algorithm via the query access model, and we calculate the Pauli decomposition using each approach. The \texttt{PauliDecompose} algorithm requires access to all of the nonzero elements of the matrix to calculate every coefficient, but our algorithm only requires the elements $M_{v,u}$ that we need to query. We compare the wall-clock runtime of both our algorithm and \texttt{PauliDecompose}. As can be seen in~\cref{fig:runtime}, the PennyLane algorithm is faster for small $n$, due to the overhead from our randomized algorithm. However, the PennyLane benchmark algorithm grows rapidly with $n$ and at around $n=10$, there is a crossover, after which our algorithm performs better. Afterwards, we take $n = 13, \ldots, 30$ and only run our algorithm since building the matrix and running \texttt{PauliDecompose} becomes prohibitively costly. Instead, we extrapolate the runtime based on the data for $n = 2, \ldots, 12$. We also numerically confirm the scaling of the query complexity of our algorithm in~\cref{fig:queries}. We fit both the runtime and the query count curves and find both scale polynomially with $n$, which is what we expected. In all runs, our algorithm found the full decomposition without fail, that is, it did not output \texttt{Fail} at any point and did not produce an incorrect decomposition. While perhaps surprising, we attribute the lack of failure to find the correct decomposition with probability $\delta$ to our bounds likely being loose.

\begin{figure*}[tb]
    \centering
    \includegraphics[scale=0.6]{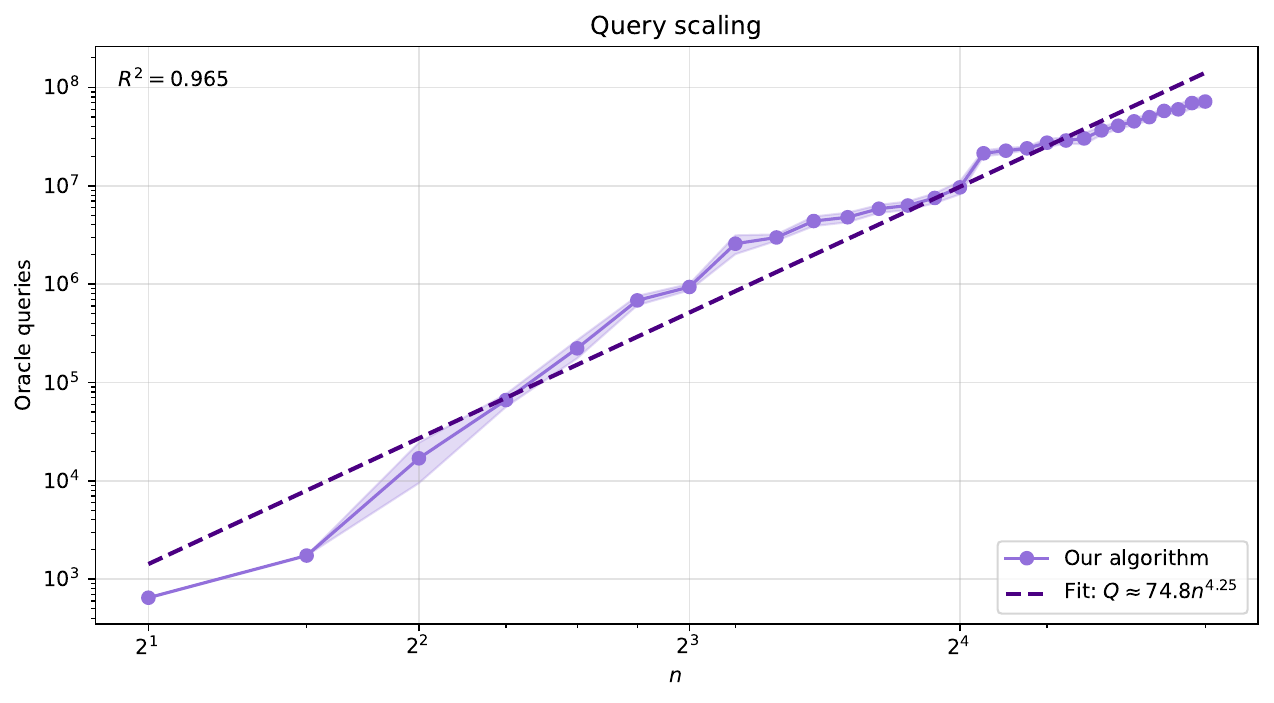}
    \caption{Query count of our algorithm for $n = 2, \ldots, 30$.}
    \label{fig:queries}
\end{figure*}

We acknowledge that our algorithm is tailored for the problem in~\cref{prob:main-problem}; if the input matrix fails to meet the polynomial support problem, then our algorithm is comparable or worse than several of the currently proposed Pauli decomposition algorithms due to the overhead from the randomization required by our approach.

\section{Discussion}\label{sec:discussion}
In this paper, we studied a promise version of the Pauli decomposition problem in which we are given classical sparse query access to a $2^n \times 2^n$ matrix that is assumed to have support on only $k = \poly(n)$ Pauli strings. To find the decomposition, we introduced a randomized classical algorithm that successfully recovers the Pauli decomposition with probability $1-\delta$ for any desired $\delta \in (0,1)$, with query and runtime complexity scaling only polynomially with respect to $n$, $k$, and $\log(1/\delta)$. Thus, although the Pauli decomposition in full generality is an exponentially hard problem, we show that it becomes efficiently solvable in a structured sparse-support regime, which is relevant to some near-term quantum algorithms working with classical input data~\cite{bharti2021iterative,huang2021near-term,bharti2022sdp}.

Conceptually, our algorithm takes advantage of the structure of Pauli strings in binary symplectic form. The Pauli-$X$ part determines where the nonzero matrix elements are, and the Pauli-$Z$ part determines the sign pattern, which leads us to considering two types of $x$ bit strings: unique and degenerate. The unique $x$ bit strings can be decoded directly from a small number of queries, while the degenerate $x$ bit strings are harder to deal with, and the process reduces to a sparse Walsh-Hadamard problem on the Boolean cube.

Our results have a specific scope, namely, those matrices that are assumed to have polynomial support on the Pauli basis and can be accessed via the query access model in~\cref{def:query-access-model}. Our algorithm does not provide a generic efficient Pauli decomposition for arbitrary matrices, as this problem is intractable in full generality. Furthermore, we note that our analysis concerns the query and runtime complexity of the algorithm, where the memory required may still be exponential in $n$.

We highlight the dependence on the promise of $k$-sparsity in the Pauli basis. Given an explicit upper bound $k_{\max}$ satisfying $k_{\max} \geq k$, our algorithm recovers the Pauli decomposition with high probability. If, however, the bound $k_{\max}$ is smaller than the true Pauli support size $k$, then the algorithm may return \texttt{Fail}, but it is not guaranteed to do so. We could supplement the algorithm with a global randomized verification at the end, where, given a candidate decomposition $M_{\mathrm{candidate}}$, we can sample random rows or columns of $M$ and compare the output with what we would expect given $M_{\mathrm{candidate}}$. This procedure can detect a "residual" $R = M - M_{\mathrm{candidate}}$ that is nonzero on an inverse-polynomial fraction of rows or columns, but this type of verification does not protect against arbitrary inputs. For example, consider $M = \ketbra{a}$ for some $a \in \{0,1\}^n$, which has Pauli decomposition
\begin{equation}
    \ketbra{a} = \frac{1}{2^n}\defprod{j=1}{n}{\parens*{I + (-1)^{a_j}Z_j}}.
\end{equation}
This has an exponential number of Pauli terms but the matrix is a matrix of all zeros except for one element, and so making a polynomial number of random checks would very likely miss this single nonzero entry.

There are a few settings where our algorithm can be particularly powerful. The first is when the matrix is specified implicitly by an efficient classical procedure that can answer the required sparse row or column queries without needing to construct the full $2^n \times 2^n$ matrix in memory. For example, suppose we are given a classical function that generates the nonzero entries in any requested row or column of a matrix that is promised to have polynomial support in the Pauli basis. Then, our algorithm could use this procedure as its query oracle to recover the Pauli decomposition using only polynomially many calls to the function, which allows us to avoid materializing the full matrix. This is especially relevant when the matrix is naturally produced by a succinct classical rule, simulation routine, or data-generating process rather than given as an explicit array.

A second useful setting is one in which the matrix is stored in a general query-accessible data structure because there are many possible downstream tasks (perhaps exponentially many) that may be performed on the same data, and the Pauli decomposition is only one such task. For example, the matrix may be updated until shortly before the desired task is specified, or the same stored data may be used for other linear-algebraic or statistical tasks, or other quantum-algorithmic primitives. As such, we would be required to keep the matrix stored in memory.

Finally, our query model may be useful in delegated or adversarial settings. For example, the data may be held by a server that we do not trust to perform the Pauli decomposition itself, while still allowing individual row and column queries to be returned and publicly audited in some way. In this setting, the client can run the decomposition algorithm on their end using only the output of queries to the server holding the data, instead of relying on the server to return the full decomposition, which the client may not trust. This separates the task of storing or serving the data from the task of certifying the Pauli decomposition. 

There are several natural directions for future work. For example, it would be nice to tighten our upper bounds, as we believe these to be somewhat loose due to the various certification steps. We also only focused on upper bounds in this work; it would be good to obtain tight lower bounds beyond a trivial $\bigOmega{kn}$ lower bound to write down the decomposition ($k$ Pauli terms labeled by $(x \mid z) \in \{0,1\}^{2n}$ and their coefficients). Similarly, tightening our (likely loose) upper bounds and tightening the trivial lower bound to saturation would be a fruitful result. Another useful direction would be to consider \emph{approximately} sparse Pauli-support matrices, where most of the support is on a polynomial number of Pauli basis elements, with a small contribution from the remaining basis elements. Progress on these questions would further clarify when classical preprocessing can be made efficient, yielding truly powerful end-to-end near-term quantum algorithms.

\begin{acknowledgments}
    D.J.S.~thanks Adam Ehrenberg for help in the initial development of this project. D.J.S.~further thanks Alexander Dalzell, Weiyuan Gong, and Alexander Schuckert for helpful discussions. D.J.S.~acknowledges funding support from a graduate research fellowship from the Joint Quantum Institute at the University of Maryland, College Park. D.J.S. and A.V.G.~acknowledge support from the U.S.~Department of Energy, Office of Science, Accelerated Research in Quantum Computing, Fundamental Algorithmic Research toward Quantum Utility (FAR-Qu). D.J.S. and A.V.G.~were also supported in part by the NSF STAQ program, DoE ASCR Quantum Testbed Pathfinder program (award No.~DE-SC0024220), NSF QLCI (award No.~OMA-2120757), AFOSR MURI, ONR MURI, DARPA SAVaNT ADVENT, ARL (W911NF-24-2-0107), and NQVL:QSTD:Pilot:FTL. D.J.S. and A.V.G.~also acknowledge support from the U.S.~Department of Energy, Office of Science, National Quantum Information Science Research Centers, Quantum Systems Accelerator (award No.~DE-SCL0000121). K.B.~is supported by a Hartree Fellowship from the Joint Center for Quantum Information and Computer Science (QuICS) at the University of Maryland, College Park.
\end{acknowledgments}

\bibliography{references}
\clearpage
\onecolumngrid
\appendix

\section{Correctness and error analysis}\label{app-sec:correctness-error-analysis}
In this Appendix, we show that our algorithm correctly recovers the Pauli decomposition with high probability. First, recall that for a fixed $x \in \{0,1\}^n$, we define
\begin{equation}
    b_x(u) \coloneqq M_{x \oplus u, u}
\end{equation}
for a query on column $u$ for the fixed $x$. Recall also the notation
\begin{equation}
    \beta_x(z) \coloneqq i^{\abs{x \wedge z}}\alpha_{x,z}
\end{equation}
for coefficients $\alpha_{x,z}$. By~\cref{lemma:pauli-string-matrix-element}, we can write
\begin{equation}
    b_x(u) = \defsum{z \in \{0,1\}^n}{}{\beta_x(z)(-1)^{z \cdot u}},\label{eqn:folding-slice}
\end{equation}
where $\beta_x$ is $p_x$-sparse; in other words, $p_x = \abs{\mathrm{supp}(\beta_x)} \leq k$, where $k$ is the number of Pauli strings the matrix $M$ has support on, and we assume $k$ is known. The first step of the algorithm is to discover the $x$ bit strings present in the Pauli decomposition of $M$; that is, we aim to find the set
\begin{equation}
    \mathcal{X} \coloneqq \{x \in \{0,1\}^n : p_x \geq 1\}.
\end{equation}
We recall the three different error budgets that we make use of throughout our analysis. First, $\eta_{\mathcal{X}}$ denotes the failure probability allowed for the discovery of the active $x$ bit strings, that is, the set $\mathcal{X}$. Second, for each $x \in \mathcal{X}$, we denote by $\eta_x$ the failure probability allowed in the unique-$x$ certification step of~\cref{alg:unique-x-bit-strings}. Finally, for each degenerate $x \in \mathcal{X}$, $\delta_x$ denotes the failure probability allowed in the degenerate recovery step of~\cref{alg:determine-degenerate-x-pauli-string}. We choose these budgets in such a way that
\begin{equation}
    \eta_{\mathcal{X}} + \defsum{x \in \mathcal{X}}{}{\eta_x} + \defsum{x \in \mathcal{X}_{\mathrm{deg}}}{}{\delta_x} \leq \delta,
\end{equation}
where $\delta$ is the failure probability of the whole algorithm that we take as a user-selected input. Since $\abs{\mathcal{X}} \leq k$, a simple choice is
\begin{equation}
    \eta_{\mathcal{X}} = \frac{\delta}{3}, \quad
    \eta_x = \frac{\delta}{3k}, \quad
    \delta_x = \frac{\delta}{3k}.\label{eqn:failure-probabilities}
\end{equation}
With this choice, the total failure probability is at most $\delta$ by a union bound.

As we discussed in~\cref{subsec:unique-x-bit-strings}, while a single query to the first row $v = 0^n$ suffices to find $\mathcal{X}$ for many cases, there are some cases where the structure of the coefficients and $z$ bit strings cancel out the contribution in the first row for a fixed $x \in \mathcal{X}$, as we saw for the simple single-qubit example in the main text. As such, we hedge against this by querying rows uniformly at random:

\begin{lemma}[Discovery of $x$ bit strings]\label{lemma:x-bit-string-discovery}
    Let $\mathcal{X} \subseteq \{0,1\}^n$ be the correct set of $x$ bit strings present in the Pauli decomposition of a matrix $M \in \CC^{2^n \times 2^n}$, where $\abs{\mathcal{X}} \leq k$. Furthermore, let $\eta_{\mathcal{X}} \in (0,1)$ and sample
    \begin{equation}
        R_{\mathcal{X}} \coloneqq \ceil*{k\log\lp \frac{k}{\eta_{\mathcal{X}}} \rp}
    \end{equation}
    rows $v^{(1)}, \ldots, v^{(R_{\mathcal{X}})} \in \{0,1\}^n$ of $M$ uniformly at random. For each queried row $v^{(r)}$, add the bit string $x = u \oplus v^{(r)}$ to the candidate set $\hat{\mathcal{X}} \subseteq \{0,1\}^n$ for every nonzero entry $(u, M_{v^{(r)},u})$ returned by the query. Then,
    \begin{equation}
        \Pr[\hat{\mathcal{X}} = \mathcal{X}] \geq 1 - \eta_{\mathcal{X}}.
    \end{equation}
\end{lemma}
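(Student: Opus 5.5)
The plan is to prove two inclusions: $\hat{\mathcal{X}} \subseteq \mathcal{X}$ always, and $\mathcal{X} \subseteq \hat{\mathcal{X}}$ with probability at least $1-\eta_{\mathcal{X}}$.

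\emph{Soundness.} Suppose a nonzero entry $M_{v,u}$ appears in a queried row. By~\cref{eqn:input-matrix}, $M_{v,u}$ is a sum of entries $P^{(i)}_{v,u}$, so at least one of these is nonzero. By~\cref{lemma:row-column-pauli-strings}, that term must satisfy $x^{(i)} = u \oplus v$. Hence every added string $u\oplus v^{(r)}$ equals the $X$-part of some term with nonzero coefficient, and lies in $\mathcal{X}$. This step is deterministic, so no failure probability is spent on spurious elements.

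\emph{Completeness for a fixed $x$.} Fix $x \in \mathcal{X}$ and a uniformly random row $v$. The only column in row $v$ that can carry a contribution from $X$-part $x$ is $u = x \oplus v$. There, $M_{v,u} = b_x(x\oplus v)$ by the definition of $b_x$ and~\cref{eqn:bx-query}. Contributions from other $X$-parts $x'\neq x$ land in different columns, so they cannot cancel this entry. Since $x\in\mathcal{X}$, the spectrum $\beta_x$ is nonzero with support size $p_x\ge1$, so $b_x$ is a nonzero function. Its Walsh support has size $p_x \le k$. As $v$ is uniform, so is $u=x\oplus v$, and~\cref{fact:boolean-uncertainty-principle} gives
\[
\Pr[M_{v,x\oplus v}\neq 0]\ \ge\ \frac{1}{p_x}\ \ge\ \frac1k .
\]
The row query returns every nonzero entry, so whenever this event occurs, $x$ is added to $\hat{\mathcal{X}}$.

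\emph{Independence and union bound.} The $R_{\mathcal{X}}$ rows are independent, so the probability that $x$ is never discovered is at most
\[
(1-1/k)^{R_{\mathcal{X}}} \le e^{-R_{\mathcal{X}}/k} \le \frac{\eta_{\mathcal{X}}}{k},
\]
using $1-a\le e^{-a}$ and $R_{\mathcal{X}}\ge k\log(k/\eta_{\mathcal{X}})$. A union bound over the at most $k$ elements of $\mathcal{X}$ gives $\Pr[\mathcal{X}\not\subseteq\hat{\mathcal{X}}]\le \eta_{\mathcal{X}}$. Combined with soundness, this yields $\Pr[\hat{\mathcal{X}}=\mathcal{X}]\ge 1-\eta_{\mathcal{X}}$.

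The main point needing care is the claim that $b_x$ is a nonzero function. It follows from uniqueness of the Walsh expansion: the characters $\chi_z$ form a basis of functions on $\{0,1\}^n$, so a nonzero $\beta_x$ gives a nonzero $b_x$. This is what makes the uncertainty principle applicable despite possible cancellations in particular rows, such as the $X-iY$ example. The remaining steps are routine.
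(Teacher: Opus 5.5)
Your proposal is correct and follows essentially the same route as the paper: deterministic soundness via \cref{lemma:row-column-pauli-strings}, a per-row detection probability of at least $1/p_x \geq 1/k$ from \cref{fact:boolean-uncertainty-principle} applied to $b_x$ at the uniformly random column $x \oplus v$, and then independence of the rows plus a union bound over at most $k$ strings. Your explicit checks that $b_x \not\equiv 0$ (by uniqueness of the Walsh expansion) and that terms with other $X$-parts cannot cancel the entry in column $x \oplus v$ fill in small points the paper leaves implicit.
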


\begin{proof}
    We note that this query-based procedure for discovering the $x$ bit strings never adds a false $x$ bit string, that is, one that isn't in the true decomposition. By~\cref{lemma:row-column-pauli-strings}, a Pauli string described by a bit string $x$ contributes to row $v$ only in the column $u = x \oplus v$, so if $M_{v,u} \neq 0$, then there must be at least one Pauli string described by $x$ present in the decomposition. Thus, we do not have to worry about "false positives."

    We now must simply bound the probability of missing a true $x \in \mathcal{X}$. To do so, we fix an $x \in \mathcal{X}$. Then, for a row $v$ chosen uniformly at random, the corresponding column index is $u = x \oplus v$, which is also uniformly random over $\{0,1\}^n$. The relevant matrix element is $M_{v,x \oplus v} = b_x(x \oplus v)$, which has Walsh support size $p_x \leq k$. By~\cref{fact:boolean-uncertainty-principle}, we have
    \begin{equation}
        \Pr_{v}[M_{v,x \oplus v} \neq 0] = \Pr_{u}[b_x(u) \neq 0] \geq \frac{1}{p_x} \geq \frac{1}{k}.
    \end{equation}
    Thus, the total probability that this given $x$ bit string is missed in all $R_{\mathcal{X}}$ queries that we make is at most
    \begin{equation}
        \lp 1 - \frac{1}{k} \rp^{R_{\mathcal{X}}} \leq e^{-R_{\mathcal{X}}/k} \leq \frac{\eta_{\mathcal{X}}}{k}.
    \end{equation}
    Taking a union bound over at most $k$ active $x$ bit strings gives us
    \begin{equation}
        \Pr[\hat{\mathcal{X}} \neq \mathcal{X}] \leq \eta_{\mathcal{X}},
    \end{equation}
    or, equivalently,
    \begin{equation}
        \Pr[\hat{\mathcal{X}} = \mathcal{X}] \geq 1 - \eta_{\mathcal{X}},
    \end{equation}
    as claimed in the lemma.
\end{proof}

We now prove the correctness of~\cref{alg:unique-x-bit-strings}, which determines whether a fixed $x$ bit string in the discovered set $\mathcal{X}$ is unique and, if so, how to decode it (i.e., get its $z$ bit string and associated coefficient).

\begin{theorem}[Correctness of~\cref{alg:unique-x-bit-strings}]\label{thm:unique-algo-correctness}
    Fix an $x \in \mathcal{X}$ and run~\cref{alg:unique-x-bit-strings} with failure budget $\eta_x \in (0,1)$ and
    \begin{equation}
        L_{\mathrm{cert}} \geq (k+1)\log\lp \frac{1}{\eta_x} \rp
    \end{equation}
    random certification queries. Then,
    \begin{enumerate}
        \item If $p_x = 1$ (i.e., $x$ is unique), the algorithm returns $(\mathtt{Unique}, \hat{z}, \beta)$ with $\hat{z} = z^\star$ and $\beta = \beta_x(z^\star)$ deterministically, where $z^\star$ is the unique support element of $\beta_x$.
        \item If $p_x \geq 2$, the algorithm returns $\mathtt{Degenerate}$ with probability at least $1 - \eta_x$.
    \end{enumerate}
\end{theorem}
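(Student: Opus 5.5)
The proof splits into the two cases of the statement, using the Walsh expansion of $b_x$ in \cref{eqn:folding-slice} together with \cref{fact:boolean-uncertainty-principle}.

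\emph{Case $p_x = 1$.} I would substitute the single-character form $b_x(u) = \beta_x(z^\star)(-1)^{z^\star\cdot u}$ and trace \cref{alg:unique-x-bit-strings} line by line. First, $\beta = b_x(0^n) = \beta_x(z^\star)$, which is nonzero because $z^\star$ lies in the support, so the early \texttt{Degenerate} return is not triggered. Next, for each $j\in[n]$ the ratio is $r_j = b_x(e_j)/\beta = (-1)^{z^\star_j}\in\{\pm1\}$, so the loop sets $\hat z_j = z^\star_j$ exactly and never returns \texttt{Degenerate}. Finally, every certification query satisfies $b_x(u^{(\ell)}) = \beta(-1)^{\hat z\cdot u^{(\ell)}}$ identically, whatever $u^{(\ell)}$ is sampled. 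The output is therefore $(\mathtt{Unique}, z^\star, \beta_x(z^\star))$ with probability one.

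\emph{Case $p_x\ge 2$.} The algorithm can only err by reaching the final line and returning \texttt{Unique}. If it returns \texttt{Degenerate} at the $\beta=0$ check or during the basis-shift loop, there is nothing to prove. Otherwise it has fixed a deterministic pair $(\hat z,\beta)$, with $\beta\neq 0$, from the queries at $0^n, e_1,\dots,e_n$; these points are fixed in advance, so the pair is independent of the later random samples $u^{(\ell)}$. I would then define the discrepancy $h_x(u) = b_x(u) - \beta(-1)^{\hat z\cdot u}$. Its Walsh spectrum is $\beta_x - \beta\,\mathbf{1}_{\{\hat z\}}$, which has support size at most $p_x+1\le k+1$.

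The key step, which I expect to be the only real obstacle, is showing $h_x\not\equiv 0$. By uniqueness of the Walsh expansion, $h_x\equiv0$ would force $\beta_x = \beta\,\mathbf{1}_{\{\hat z\}}$, so $\beta_x$ would have support size at most $1$, contradicting $p_x\ge2$. Uniqueness of the expansion holds because the characters $\chi_z$ are orthogonal, hence linearly independent; I would note this briefly since the paper only defines the transform.

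With $h_x\neq0$, \cref{fact:boolean-uncertainty-principle} gives $\Pr_u[h_x(u)\neq 0]\ge 1/(k+1)$ for uniform $u$. Conditioned on the pair $(\hat z,\beta)$, the $L_{\mathrm{cert}}$ samples are independent, so the probability that all of them pass is at most $(1-1/(k+1))^{L_{\mathrm{cert}}}\le e^{-L_{\mathrm{cert}}/(k+1)}\le \eta_x$, using $L_{\mathrm{cert}}\ge (k+1)\log(1/\eta_x)$. Averaging over the at most one pair reached, the algorithm returns \texttt{Degenerate} with probability at least $1-\eta_x$.
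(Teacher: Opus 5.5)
Your proposal is correct and follows the paper's proof essentially step for step. The unique case is the same line-by-line trace, and the degenerate case uses the same discrepancy $h_x$, the same nonvanishing argument, the uncertainty-principle bound with support at most $k+1$, and the same geometric tail bound. Your extra remarks are small rigor improvements that the paper leaves implicit: the $\beta=0$ early exit, the independence of $(\hat{z},\beta)$ from the certification samples, and the justification of uniqueness of the Walsh expansion via orthogonality.
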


\begin{proof}
    First, assume $p_x = 1$. Then, there exists a unique $z^\star \in \{0,1\}^n$ such that $\beta_x(z^\star) \neq 0$, or, in other words,
    \begin{equation}
        b_x(u) = \beta_x(z^\star)(-1)^{z^\star \cdot u}
    \end{equation}
    for all $u \in \{0,1\}^n$. If we take $u = 0^n$, then we have
    \begin{equation}
        b_x(0^n) = \beta_x(z^\star) \neq 0.
    \end{equation}
    Then, for each standard basis vector $e_j$, we have
    \begin{equation}
        \frac{b_x(e_j)}{b_x(0^n)} = \frac{\beta_x(z^\star)(-1)^{z^\star_j}}{\beta_x(z^\star)} = (-1)^{z^\star_j} \in \{\pm 1\}.
    \end{equation}
    Thus, the algorithm reconstructs each bit $z^\star_j$ of $z^\star$ exactly. Finally, every certification query passes because the equality
    \begin{equation}
        b_x(u^{(\ell)}) = \beta_x(z^\star)(-1)^{z^\star \cdot u^{(\ell)}}
    \end{equation}
    holds for all $u^{(\ell)}$ for $\ell \in [L_{\mathrm{cert}}]$.

    Now, assume that $p_x \geq 2$ (i.e., $x$ is actually degenerate). If the algorithm rejects at any of the ratio checks $b_x(e_j)/b_x(0^n) \neq \pm 1$, then the algorithm returns \texttt{Degenerate} and there is nothing else to prove. Thus, we suppose that the basis-query stage produces a candidate $(\hat{z}, \beta)$ with all ratios equal to $\pm 1$. We then define
    \begin{equation}
        h_x(u) \coloneqq b_x(u) - \beta(-1)^{\hat{z} \cdot u}.
    \end{equation}
    If $h_x \equiv 0$, then $b_x$ would be exactly a single Walsh character, which suggests that $\beta_x$ has support on just one element. This would force $p_x = 1$, contradicting the assumption that $p_x \geq 2$, and so $h_x$ is a nonzero function. The Walsh-Hadamard spectrum of $h_x$ is
    \begin{equation}
        \hat{h}_x(z) = \beta_x(z) - \beta\vb{1}_{z=\hat{z}},
    \end{equation}
    where $\vb{1}_{z=\hat{z}}$ is 1 for $z = \hat{z}$ and 0 otherwise. Thus,
    \begin{equation}
        \abs{\mathrm{supp}(\hat{h}_x)} \leq p_x + 1 \leq k + 1.
    \end{equation}
    According to~\cref{fact:boolean-uncertainty-principle}, a uniformly random input $u \in \{0,1\}^n$ satisfies
    \begin{equation}
        \Pr_u[h_x(u) \neq 0] \geq \frac{1}{\abs{\mathrm{supp}(\hat{h}_x)}} \geq \frac{1}{k+1}.
    \end{equation}
    Equivalently, each certification query detects that the candidate singleton is not truly a singleton with probability at least $1/(k+1)$. As each certification query is sampled independently, the probability that all $L_{\mathrm{cert}}$ checks fail to detect the error is at most
    \begin{equation}
        \lp 1 - \frac{1}{k+1} \rp^{L_{\mathrm{cert}}} \leq e^{-L_{\mathrm{cert}}/(k+1)} \leq \eta_x.
    \end{equation}
    where the second inequality is obtained by choosing
    \begin{equation}
        L_{\mathrm{cert}} \geq (k+1)\log\lp \frac{1}{\eta_x} \rp.
    \end{equation}
    Thus, when $p_x \geq 2$,~\cref{alg:unique-x-bit-strings} returns \texttt{Degenerate} with probability at least $1 - \eta_x$.
\end{proof}

\cref{alg:determine-degenerate-x-pauli-string} maintains a recovered set $\mathcal{S}$ of pairs $(z,\beta_x(z))$, where recall, given $\mathcal{S}$, we define the residual oracle as
\begin{equation}
    \Tilde{b}_x(u) \coloneqq b_x(u) - \defsum{(z,\beta)\in\mathcal{S}}{}{\beta(-1)^{z \cdot u}}.
\end{equation}
When all elements of $\mathcal{S}$ are correct, $\Tilde{b}_x$ has Walsh-Hadamard spectrum equal to the remaining unrecovered part of $\beta_x$. Denote by $p_\tau$ the number of remaining true support elements to be found at the beginning of round $\tau$, conditioned on the event that no false support element has been accepted. Then,~\cref{lemma:folding-identity} in~\cref{subsec:walsh-hadamard-analysis} tells us that the queried values
\begin{equation}
    y(t) = \Tilde{b}_x(R^\intercal t)
\end{equation}
have Walsh-Hadamard spectrum
\begin{equation}
    g(s) = \defsum{z:Rz=s}{}{\Tilde{\beta}_x(z)}.
\end{equation}
Furthermore,~\cref{lemma:folding-identity} implies that a set of shifted queries satisfies
\begin{equation}
    g^{(w)}(s) = \defsum{z:Rz=s}{}{\Tilde{\beta}_x(z)(-1)^{z \cdot w}}.
\end{equation}
For the remainder of this section, we analyze the degenerate $x$ bit string algorithm in~\cref{alg:determine-degenerate-x-pauli-string}. To do so, we fix an $x \in \mathcal{X}$ such that $p_x \geq 2$. Furthermore, let $p_x^{\max}$ be an upper bound on $p_x$ such that $p_x \leq p_x^{\max}$; the trivial choice $p_x^{\max} = k$ suffices. \cref{alg:determine-degenerate-x-pauli-string} uses a per-slice failure budget $\delta_x$ for a given $x$ bit string, which we split up across three possible failure events:
\begin{enumerate}
    \item Some true residual support element is never isolated in any of the folding rounds;
    \item Some collision bin is incorrectly accepted as a singleton;
    \item After the peeling rounds, the residual is nonzero but the final randomized residual check fails to detect it.
\end{enumerate}
We allocate failure probability at most $\delta_x/3$ to each of these events for simplicity. 

If a bin $s$ contains exactly one support element $z^\star$, then $g(s) = \beta_x(z^\star)$ and $g^{(w)}(s) = \beta_x(z^\star)(-1)^{z^\star \cdot w}$ for all shifts $w$. Choosing $w = e_j$ for $j \in [n]$ and taking the ratio
\begin{equation}
    \frac{g^{(e_j)}(s)}{g(s)} = (-1)^{z^\star \cdot e_j} = (-1)^{z^\star_j}
\end{equation}
then allows us to decode $z^\star$ bit-by-bit, as we saw for the unique $x$ bit strings above. We determine the probability of isolating such a $z^\star$ in a given round according to the following lemma:

\begin{lemma}[Finite-$p_\tau$ singleton bound]\label{lemma:finite-singleton-bound}
    Fix one unrecovered support element $z^\star$ at the beginning of a round. If the fold uses $B = 2^m$ bins, then
    \begin{equation}
        \Pr[z^\star \text{ collides with at least one other unrecovered element}] \leq \frac{p_\tau-1}{B},
    \end{equation}
    or, in other words,
    \begin{equation}
        \Pr[z^\star \text{ is a singleton}] \geq 1 - \frac{p_\tau-1}{B}.
    \end{equation}
    If $B \geq cp_\tau$ for some constant $c$, then
    \begin{equation}
        \Pr[z^\star \text{ is a singleton}] \geq 1 - \frac{p_\tau-1}{B} \geq 1 - \frac{1}{c} + \frac{1}{cp_\tau}.
    \end{equation}
\end{lemma}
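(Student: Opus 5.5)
The plan is to compute the collision probability exactly for a single pair of support elements and then take a union bound over the other unrecovered elements. Condition on everything that happened before the current round: the recovered set $\mathcal{S}$, and hence the residual spectrum $\Tilde{\beta}_x$ and its support, which has size $p_\tau$ and contains $z^\star$. Since \cref{alg:determine-degenerate-x-pauli-string} samples a fresh $R\in\{0,1\}^{m\times n}$ uniformly at random in each round, $R$ is independent of this conditioning. So we may treat the residual support as a fixed set $Z_\tau\ni z^\star$ with $\abs{Z_\tau}=p_\tau$, and $R$ as uniform. By \cref{lemma:folding-identity}, $z^\star$ lands in bin $s=Rz^\star$, and it collides with $z\in Z_\tau\setminus\{z^\star\}$ exactly when $Rz=Rz^\star$.

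\textbf{Pairwise collision probability.} Fix $z\neq z^\star$ and set $d\coloneqq z\oplus z^\star\neq 0^n$. Over $\FF_2$ we have $Rz=Rz^\star \iff Rd=0^m$. Let $r_1,\ldots,r_m$ be the rows of $R$; they are independent and uniform on $\{0,1\}^n$. Because $d\neq 0^n$, pick an index $j$ with $d_j=1$. Conditioned on all other entries of $r_i$, the bit $r_i\cdot d$ is determined up to XOR with the uniform bit $(r_i)_j$, so $r_i\cdot d$ is a uniform bit. By independence of the rows,
\begin{equation}
    \Pr[Rd=0^m]=2^{-m}=\frac{1}{B}.
\end{equation}

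\textbf{Union bound.} Summing over the $p_\tau-1$ elements of $Z_\tau\setminus\{z^\star\}$ gives
\begin{equation}
    \Pr[\exists z\in Z_\tau\setminus\{z^\star\}: Rz=Rz^\star]\le \frac{p_\tau-1}{B}.
\end{equation}
The singleton bound is the complement of this event. If $B\ge cp_\tau$, then
\begin{equation}
    \frac{p_\tau-1}{B}\le\frac{p_\tau-1}{cp_\tau}=\frac1c-\frac{1}{cp_\tau},
\end{equation}
which yields the final inequality in the statement.

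The only delicate point is the conditioning. The residual support is itself random, since it depends on earlier rounds. The argument works because $R$ is drawn fresh and independently in each round, so the bound holds conditionally on any history and hence unconditionally. We also need the standing assumption that no false support element has been accepted; otherwise the support of $\Tilde{\beta}_x$ could contain spurious elements and have size other than $p_\tau$. I would state both of these explicitly. Everything else is routine.
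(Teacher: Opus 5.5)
Your proof is correct and follows the paper's argument. Both reduce $Rz=Rz^\star$ to $R(z\oplus z^\star)=0^m$, which has probability $2^{-m}=1/B$ for uniformly random $R$, then union-bound over the other $p_\tau-1$ unrecovered elements and substitute $B\ge cp_\tau$; your added justifications (each $r_i\cdot d$ is a uniform bit, the fresh $R$ each round makes the bound hold conditionally on any history, and the assumption that no false element has been accepted) make explicit points the paper leaves implicit.
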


\begin{proof}
    For any distinct $z \neq z^\star$, to have $Rz = Rz^\star$ is equivalent to $R(z \oplus z^\star) = 0$. Since $z \oplus z^\star \neq 0^n$ by the fact that $z \neq z^\star$ and each row of $R$ is uniformly sampled from $\{0,1\}^n$, we have
    \begin{equation}
        \Pr[Rz = Rz^\star] = 2^{-m} = \frac{1}{B}.
    \end{equation}
    Taking a union bound over the other $p_\tau - 1$ unrecovered elements then gives us
    \begin{equation}
        \Pr[z^\star\text{ is a singleton}] \geq 1 - \frac{p_\tau-1}{B}.
    \end{equation}
    Then, if we take $B \geq cp_\tau$, we can conclude that
    \begin{align}
        \Pr[z^\star\text{ is a singleton}] &\geq 1 - \frac{p_\tau-1}{cp_\tau}\\
        &= 1 - \frac{1}{c} + \frac{1}{cp_\tau},
    \end{align}
    as claimed in the lemma.
\end{proof}

With the probability of a singleton being isolated in a single round established, we now calculate the expected number of true (i.e., not mistaken singletons, which we analyze later) singleton bins decoded in round $\tau$, which we denote by the random variable $X_\tau$. Thus, we have
\begin{equation}
    p_{\tau+1} = p_\tau - X_{\tau+1}
\end{equation}
and the following lemma:

\begin{lemma}[Progress over repeated folds]\label{lemma:expected-contraction}
    Fix a degenerate $x$ bit string with degeneracy $p_x \geq 2$ and suppose that the folding procedure uses $B = 2^m$ bins with $B \geq cp_x$ for some constant $c > 1$. Condition on the event that no collision bin is falsely accepted as a singleton during the folding rounds. Let $p_0 \coloneqq p_x$ and denote by $p_\tau$ the number of unrecovered true support elements after $\tau$ folding rounds. Then, for every $\tau \geq 0$,
    \begin{equation}
        \mathbb{E}[p_{\tau+1} \mid p_\tau] \leq \frac{p_\tau}{c}
    \end{equation}
    Thus, after $T$ rounds,
    \begin{equation}
        \mathbb{E}[p_T] \leq \frac{p_x}{c^T} \leq \frac{p^{\max}_x}{c^T.}
    \end{equation}
    Therefore,
    \begin{equation}
        \Pr[p_T \geq 1] \leq \frac{p_x^{\max}}{c^T}.
    \end{equation}
    Choosing $T \geq \ceil*{\log_c{\parens*{\frac{3p_x^{\max}}{\delta_x}}}}$ thus ensures that all support elements are isolated in at least one round with probability at least $1 - \delta_x/3$.
\end{lemma}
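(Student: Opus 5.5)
The plan is to prove the one-round contraction $\mathbb{E}[p_{\tau+1}\mid p_\tau]\le p_\tau/c$ directly from \cref{lemma:finite-singleton-bound}, then iterate it with the tower property, and finish with Markov's inequality.

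\textbf{One-round contraction.} Condition on the event that no collision bin is ever falsely accepted. Under this conditioning every element of $\mathcal{S}$ is a correct pair $(z,\beta_x(z))$, so the residual spectrum $\Tilde{\beta}_x$ at the start of round $\tau+1$ is supported exactly on the $p_\tau$ unrecovered true elements.

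For each such element $z^\star$, consider the event that it lands in a singleton bin under the fresh uniformly random $R$. On that event:
\begin{itemize}
\item the shifted ratios $g^{(e_j)}(s)/g(s)=(-1)^{z^\star_j}$ decode $\hat z=z^\star$ exactly;
\item the consistency check $R\hat z=s$ passes;
\item every random-shift test passes, because $g^{(w)}(s)=\beta_x(z^\star)(-1)^{z^\star\cdot w}$ holds identically.
\end{itemize}
So $z^\star$ is peeled in that round. Hence $p_{\tau+1}\le\#\{z^\star \text{ unrecovered}: z^\star \text{ not isolated}\}$. By linearity of expectation and \cref{lemma:finite-singleton-bound},
\[
\mathbb{E}[p_{\tau+1}\mid p_\tau]\le p_\tau\cdot\frac{p_\tau-1}{B}.
\]
Since $p_\tau\le p_x$ and $B\ge cp_x$, this gives $(p_\tau-1)/B\le 1/c$, and so the bound is at most $p_\tau/c$. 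The bound is trivially valid when $p_\tau=0$.

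\textbf{Iteration and tail bound.} The tower property then yields
\[
\mathbb{E}[p_T]\le c^{-T}p_0=p_x/c^T\le p_x^{\max}/c^T.
\]
Because $p_T$ takes nonnegative integer values, Markov's inequality gives $\Pr[p_T\ge1]\le\mathbb{E}[p_T]$. Substituting $T\ge\log_c(3p_x^{\max}/\delta_x)$ makes $c^{T}\ge 3p_x^{\max}/\delta_x$, so $\Pr[p_T\ge1]\le\delta_x/3$.

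\textbf{Main obstacle: the conditioning.} Conditioning on ``no false acceptance'' is an event defined over all rounds, and it could bias the distribution of $R$ in a given round. I would handle this with a coupling rather than a literal conditional expectation. Define an idealized process in which only genuine singletons are ever accepted, and note the following:
\begin{itemize}
\item on the good event, the idealized process coincides with the actual algorithm;
\item in the idealized process each round's $R$ is fresh and independent of the past, so the expectation computation above is rigorous.
\end{itemize}
The claimed probability bound is then read as a bound for this idealized process. The probability that the actual algorithm deviates from it is charged to the separate $\delta_x/3$ budget for false singletons, which combines with this one in a union bound.

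A secondary point: once the residual spectrum is exactly correct, a true element whose bin is shared with a previously recovered $z$ creates no collision, since peeled elements have zero residual coefficient. This is what justifies counting collisions only among unrecovered elements.
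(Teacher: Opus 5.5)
Your proposal is correct and follows the paper's proof: linearity of expectation over the per-element isolation bound of \cref{lemma:finite-singleton-bound} gives $\mathbb{E}[p_{\tau+1}\mid p_\tau]\le p_\tau(p_\tau-1)/B\le p_\tau/c$, and the tower property plus Markov's inequality finish the argument. Your two additions go beyond the paper and are sound: you check that a true singleton survives all the bin tests, and you replace the literal conditioning on the global no-false-acceptance event with a coupling to an idealized process. Both fill in points the paper passes over, since it applies the singleton bound directly inside the conditional expectation, and the coupling is exactly the form the union bound in \cref{thm:fixed-x-recovery-probability} needs.
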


\begin{proof}
    Fix a round $\tau + 1$ and suppose that $p_\tau$ true support elements remain unrecovered at the beginning of this round. If $p_{\tau} = 0$, then $p_{\tau+1} = 0$ deterministically, and the desired bound is immediate. Thus, for the rest of the proof, we assume that $p_\tau \geq 1$. Index the remaining support elements to be found by $z^{(1)}, \ldots, z^{(p_\tau)}$. For each $i \in [p_\tau]$, denote by $I_i$ the indicator random variable that $z^{(i)}$ is isolated as a singleton in round $\tau + 1$. Then, $X_{\tau+1} := \defsum{i=1}{p_\tau}{I_i}$ is the number of true support elements isolated as singletons in this round. By~\cref{lemma:finite-singleton-bound}, each remaining support element is isolated with probability at least $1 - \frac{p_\tau-1}{B}$. Using the linearity of expectation, we have
    \begin{equation}
        \mathbb{E}[X_{\tau+1} \mid p_\tau] = \defsum{i=1}{p_\tau}{\Pr[I_i=1 \mid p_\tau]} \geq p_\tau\lp 1 - \frac{p_\tau-1}{B} \rp.
    \end{equation}
    The number of unrecovered elements after the round is then $p_{\tau+1} = p_\tau - X_{\tau+1}$. Thus, we have
    \begin{equation}
        \mathbb{E}[p_{\tau+1} \mid p_\tau] = p_\tau - \mathbb{E}[X_\tau \mid p_\tau] \leq p_\tau \cdot \frac{p_\tau-1}{B}.
    \end{equation}
    Since $p_\tau \leq p_x$ and $B \geq cp_x$, it is also true that $B \geq cp_\tau$, so we have
    \begin{equation}
        p_\tau \cdot \frac{p_\tau-1}{B} \leq p_\tau \cdot \frac{p_\tau-1}{cp_\tau} = \frac{p_\tau-1}{c} \leq \frac{p_\tau}{c}.
    \end{equation}
    Therefore,
    \begin{equation}
        \mathbb{E}[p_{\tau+1} \mid p_\tau] \leq \frac{p_\tau}{c}.
    \end{equation}
    Then, taking the expectation of the above quantity, we have
    \begin{equation}
        \mathbb{E}[p_{\tau+1}] = \mathbb{E}[\mathbb{E}[p_{\tau+1} \mid p_\tau]] \leq \mathbb{E}\lb \frac{p_\tau}{c} \rb = \frac{1}{c}\mathbb{E}[p_\tau].
    \end{equation}
    By induction and $p_0 = p_x$, by round $\tau = T$, we have
    \begin{equation}
        \mathbb{E}[p_T] \leq \frac{p_x}{c^T} \leq \frac{p^{\max}_x}{c^T}.
    \end{equation}
    Finally, using Markov's inequality, we have
    \begin{equation}
        \Pr[p_T \geq 1] \leq \mathbb{E}[p_T] \leq \frac{p_x^{\max}}{c^T}.
    \end{equation}
    Thus, choosing $T \geq \ceil*{\log_c\lp \frac{3p_x^{\max}}{\delta_x} \rp}$ ensures that
    \begin{equation}
        \frac{p_x^{\max}}{c^T} \leq \frac{\delta_x}{3},
    \end{equation}
    which proves the claim in the lemma.
\end{proof}

We mentioned above that there is a chance that a collision bin \emph{appears} to be a singleton and passes the random certification shifts. We now bound that probability. First, fix a bin $s$ in some round. Suppose also that the true set of coefficients hashed into bin $s$ is $\{(z^{(1)}, \beta_1), \ldots, (z^{(r)}, \beta_r)\}$ for $r \geq 2$. Furthermore, assume that the basis-shift ratios produce a candidate singleton $(\hat{z}, \hat{\beta})$. Define the quantity
\begin{equation}
    h(w) \coloneqq g^{(w)}(s) - \hat{\beta} \cdot (-1)^{\hat{z} \cdot w}.\label{eqn:discrepancy-function}
\end{equation}
If the candidate is wrong, then $h$ is nonzero with Walsh support size is at most $r+1$. Then, we have the following:

\begin{lemma}[Random shift certification]\label{lemma:random-shift-certification}
    If a candidate singleton for a bin is incorrect and the true bin contains $r \geq 2$ elements, then a uniformly random shift $w$ rejects the candidate with probability at least $1/(r+1) \geq 1/(p^{\max}_x+1)$. Therefore, $L$ independent random shifts reject the candidate with probability at least
    \begin{equation}
        1 - \lp 1 - \frac{1}{p^{\max}_x+1} \rp^L.
    \end{equation}
\end{lemma}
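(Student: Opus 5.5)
The plan is to view the discrepancy $h$ from \cref{eqn:discrepancy-function} as a function of the shift $w \in \{0,1\}^n$ and apply \cref{fact:boolean-uncertainty-principle} to it. By \cref{lemma:folding-identity}, for a fixed fold $R$ and fixed bin $s$,
\begin{equation}
    g^{(w)}(s) = \defsum{i=1}{r}{\beta_i(-1)^{z^{(i)} \cdot w}},
\end{equation}
so $w \mapsto g^{(w)}(s)$ is a Walsh-sparse function of $w$, with spectrum supported on $\{z^{(1)},\ldots,z^{(r)}\}$. It follows that
\begin{equation}
    h(w) = \defsum{i=1}{r}{\beta_i(-1)^{z^{(i)} \cdot w}} - \hat{\beta}(-1)^{\hat{z}\cdot w}
\end{equation}
has Walsh spectrum $\hat{h}(z) = \defsum{i=1}{r}{\beta_i\vb{1}_{z=z^{(i)}}} - \hat{\beta}\vb{1}_{z=\hat{z}}$. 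Its support has size at most $r+1$. Since $r \leq p_\tau \leq p_x \leq p_x^{\max}$, this is also at most $p^{\max}_x+1$.

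The next step is to show $h \not\equiv 0$. Because the Walsh characters are linearly independent, $h \equiv 0$ would force $\hat{h} \equiv 0$. That would require the residual spectrum restricted to bin $s$ to equal $\hat{\beta}\vb{1}_{z=\hat{z}}$, a single term. This contradicts the assumption that the bin truly contains $r \geq 2$ elements with nonzero coefficients. (Here $\hat\beta = g(s)$, and the case $g(s)=0$ is excluded by the algorithm.) So whenever the candidate is wrong in this sense, $h$ is a nonzero function.

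With $h$ nonzero and $|\mathrm{supp}(\hat h)| \leq r+1$, \cref{fact:boolean-uncertainty-principle} gives $\Pr_w[h(w)\neq 0] \geq 1/(r+1) \geq 1/(p^{\max}_x+1)$. The event $h(w)\neq 0$ is exactly the event that the ratio test $g^{(w)}(s)/g(s) = (-1)^{\hat z\cdot w}$ fails, since $g(s)=\hat\beta\neq 0$. So a uniformly random shift rejects the candidate with at least this probability.

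For $L$ shifts I will use that $w^{(1)},\ldots,w^{(L)}$ are sampled independently of each other and of $R$. Conditioned on $R$, which fixes the bin contents and the candidate $(\hat z,\hat\beta)$, the non-rejection events are independent, each with probability at most $1-1/(p^{\max}_x+1)$. Multiplying these gives the stated bound.

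The main subtlety is justifying this conditioning. The candidate $\hat z$ depends only on $R$ and on the deterministic shifts $e_j$, never on the random $w^{(\ell)}$, so it is fixed before the random shifts are drawn. I also need to make sure the residual $\mathcal{S}$ is fixed at the start of the round, so the spectrum of $h$ is well defined.
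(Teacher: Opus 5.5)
Your proposal is correct and follows essentially the same route as the paper. Both arguments bound the Walsh support of the discrepancy $h$ by $r+1$, show $h\not\equiv 0$ from uniqueness of the Walsh expansion together with $r\geq 2$, and apply the Boolean uncertainty principle. Both then identify $h(w)\neq 0$ with rejection via $g(s)=\hat{\beta}\neq 0$, and multiply the per-shift bounds over $L$ independent shifts. Your explicit remark that $R$, $\mathcal{S}$, and hence $(\hat{z},\hat{\beta})$ are fixed before the $w^{(\ell)}$ are drawn is what justifies that final independence step, and the paper leaves this implicit.
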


\begin{proof}
    The shifted folded value of bin $s$ is
    \begin{equation}
        g^{(w)}(s) = \defsum{i=1}{r}{\beta_i(-1)^{z^{(i)} \cdot w}}.
    \end{equation}
    The candidate singleton instead would suggest that the shifted folded value should be $\hat{\beta}(-1)^{\hat{z} \cdot w}$. With the discrepancy function defined in~\cref{eqn:discrepancy-function} and the expression for $g^{(w)}(s)$ above, we have
    \begin{equation}
        h(w) = \defsum{i=1}{r}{\beta_i(-1)^{z^{(i)} \cdot w} - \hat{\beta}(-1)^{\hat{z} \cdot w}}.
    \end{equation}
    Thus, $h$ is a Walsh-sparse function of the shift $w$ and its Walsh support is contained in $\{z^{(1)}, \ldots, z^{(r)}\} \cup \{\hat{z}\}$. That is, $\abs{\mathrm{supp}(\hat{h})} \leq r+1$. Furthermore, $h$ is a nonzero function, because if $h \equiv 0$, then
    \begin{equation}
        \defsum{i=1}{r}{\beta_i(-1)^{z^{(i)} \cdot w}} = \hat{\beta}(-1)^{\hat{z} \cdot w}
    \end{equation}
    for all $w \in \{0,1\}^n$. By the uniqueness of the Walsh-Hadamard expansion, the true bin contribution would be only a single Walsh character with coefficient $\hat{\beta}$, but this contradicts the assumption that the bin contains $r \geq 2$ nonzero residual support elements. Thus, $h \not\equiv 0$ and has Walsh support size at most $r+1$, so by~\cref{fact:boolean-uncertainty-principle}, we have
    \begin{equation}
        \Pr_{w\sim\{0,1\}^n}[h(w) \neq 0] \geq \frac{1}{r+1}.
    \end{equation}
    As $r \leq p_x \leq p^{\max}_x$, it is true that
    \begin{equation}
        \Pr_w[h(w) \neq 0] \geq \frac{1}{p^{\max}_x + 1}.
    \end{equation}
    Note that the random-shift certification test accepts the candidate on shift $w$ when
    \begin{equation}
        \frac{g^{(w)}(s)}{g(s)} = (-1)^{\hat{z} \cdot w}.
    \end{equation}
    Since $\hat{\beta} = g(s) \neq 0$, this is equivalent to
    \begin{equation}
        g^{(w)}(s) - \hat{\beta}(-1)^{\hat{z} \cdot w} = 0,
    \end{equation}
    which is exactly the same as $h(w) = 0$. Thus, when $h(w) \neq 0$, the random-shift test rejects the candidate, with a single test rejecting with probability at least $1/(p_x^{\max}+1)$. Over $L$ independent shift tests, the probability that none of them reject the candidate is at most
    \begin{equation}
        \parens*{1 - \frac{1}{p^{\max}_x+1}}^L,
    \end{equation}
    and so the probability that at least one of the $L$ tests rejects the candidate is at least
    \begin{equation}
        1 - \parens*{1 - \frac{1}{p_x^{\max}+1}}^L,
    \end{equation}
    as claimed in the lemma.
\end{proof}

If we choose
\begin{equation}
    L \geq (p^{\max}_x+1)\log\lp \frac{3BT}{\delta_x} \rp,
\end{equation}
then the probability that any one of the at most $BT$ candidate bins in a fixed-$x$ run passes all certification despite being incorrect is at most $\delta_x/3$.

Finally, after obtaining a candidate set $\mathcal{S}$ for the fixed $x$ bit string, we perform a final check to ensure that we did not miss any support elements. Formally,

\begin{lemma}[Final residual check]\label{lemma:final-residual-check}
    Suppose that after the peeling rounds the residual $\Tilde{b}_x$ is nonzero and that no false support elements have been accepted. Then $\Tilde{b}_x$ has a Walsh support of size at most $p_x^{\max}$, and if we sample
    \begin{equation}
        q \geq p_x^{\max}\log\lp \frac{3}{\delta_x} \rp
    \end{equation}
    inputs $u^{(1)}, \ldots, u^{(q)}$ uniformly at random, then
    \begin{equation}
        \Pr[\Tilde{b}_x(u^{(\ell)}) = 0 \text{ for all } \ell \in [q]] \leq \frac{\delta_x}{3}.
    \end{equation}
\end{lemma}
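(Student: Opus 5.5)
The plan is to mirror the certification arguments used for \cref{lemma:x-bit-string-discovery} and \cref{thm:unique-algo-correctness}: show that the residual has small Walsh support, then apply \cref{fact:boolean-uncertainty-principle} to each independent sample and multiply.

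\textbf{Step 1: the residual spectrum is a sub-spectrum of $\beta_x$.} Since no false support element was accepted, every pair in $\mathcal{S}$ is a true pair $(z,\beta_x(z))$ with $z \in \mathrm{supp}(\beta_x)$. Each such pair is stored at most once, because the algorithm adds $\hat{z}$ only when $\hat{z} \notin \mathcal{S}$. Hence the residual spectrum
\begin{equation}
    \Tilde{\beta}_x(z) = \beta_x(z) - \defsum{(z',\beta)\in\mathcal{S}}{}{\beta\vb{1}_{z=z'}}
\end{equation}
is zero on every recovered $z$ and equals $\beta_x(z)$ on every unrecovered $z$. One small point needs care here: the stored coefficient is $g(s)$, and for a genuine singleton this equals $\beta_x(\hat{z})$. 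This is exactly what ``no false support element accepted'' guarantees. It follows that $\mathrm{supp}(\Tilde{\beta}_x) \subseteq \mathrm{supp}(\beta_x)$, so
\begin{equation}
    \abs{\mathrm{supp}(\Tilde{\beta}_x)} \leq p_x \leq p_x^{\max}.
\end{equation}
Since $\Tilde{b}_x(u) = \sum_z \Tilde{\beta}_x(z)(-1)^{z\cdot u}$, uniqueness of the Walsh expansion makes $\Tilde{\beta}_x$ the Walsh spectrum of $\Tilde{b}_x$. This proves the first claim.

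\textbf{Step 2: each sample detects a nonzero residual.} By hypothesis $\Tilde{b}_x \not\equiv 0$, so \cref{fact:boolean-uncertainty-principle} applies and gives
\begin{equation}
    \Pr_u[\Tilde{b}_x(u) \neq 0] \geq \frac{1}{p_x^{\max}}.
\end{equation}

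\textbf{Step 3: independence over the $q$ samples.} The samples $u^{(\ell)}$ are drawn fresh and independently of everything that determined $\mathcal{S}$, so the residual function is fixed before they are drawn. Therefore
\begin{equation}
    \Pr[\Tilde{b}_x(u^{(\ell)}) = 0\ \forall \ell] \leq \lp 1-\frac{1}{p_x^{\max}}\rp^q \leq e^{-q/p_x^{\max}} \leq \frac{\delta_x}{3},
\end{equation}
where the last inequality uses $q \geq p_x^{\max}\log(3/\delta_x)$.

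The main obstacle is Step 1: I have to justify cleanly that conditioning on ``no false acceptance'' makes $\mathcal{S}$ a set of exact true pairs, so that no spurious Walsh terms enter the residual. Once that is in place, the remaining steps are routine.
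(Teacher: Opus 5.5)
Your proposal is correct and follows the paper's proof: you bound the residual's Walsh support by $p_x^{\max}$, apply the Boolean uncertainty principle to get a per-sample detection probability of at least $1/p_x^{\max}$, and conclude via $(1-1/p_x^{\max})^q \le e^{-q/p_x^{\max}} \le \delta_x/3$. Your Step 1, which shows $\mathrm{supp}(\tilde{\beta}_x)\subseteq\mathrm{supp}(\beta_x)$ under the no-false-acceptance hypothesis, and your independence remark in Step 3 make explicit what the paper only asserts.
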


\begin{proof}
    Note that $\Tilde{b}_x$ has a nonzero residual Walsh-Hadamard spectrum with support size at most $p_x^{\max}$. By~\cref{fact:boolean-uncertainty-principle}, if we make a query on a uniformly sampled location, then we can detect the residual with probability at least $1/p_x^{\max}$. Thus, the probability that all $q$ queries miss the residual is at most
    \begin{equation}
        \lp 1 - \frac{1}{p_x^{\max}} \rp^q \leq e^{-q/p_x^{\max}} \leq \frac{\delta_x}{3},
    \end{equation}
    which proves the claim.
\end{proof}

With all of the above established, we make our final statement on the success probability for decoding a single fixed degenerate $x$ bit string. By combining~ all of the above analysis, we have the following theorem:

\begin{theorem}[Fixed-$x$ recovery probability]\label{thm:fixed-x-recovery-probability}
    Fix a degenerate $x$ bit string and let $p_x^{\max} \geq p_x$. Choose
    \begin{align}
        m &= \ceil*{\log(cp_x^{\max})},\quad B = 2^m,\\
        T &= \ceil*{\log_c\lp \frac{3p_x^{\max}}{\delta_x} \rp},\\
        L &= \ceil*{(p_x^{\max}+1)\log\lp \frac{3BT}{\delta_x} \rp},\\
        q &= \ceil*{p_x^{\max}\log\lp \frac{3}{\delta_x} \rp}.
    \end{align}
    Then,~\cref{alg:determine-degenerate-x-pauli-string} returns the correct support and coefficients of $\beta_x$ with probability at least $1 - \delta_x$.
\end{theorem}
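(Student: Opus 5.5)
The plan is to define three bad events and show that, outside their union, \cref{alg:determine-degenerate-x-pauli-string} returns exactly $\mathrm{supp}(\beta_x)$ with the correct coefficients. Then a union bound finishes, using the allocation of $\delta_x/3$ to each event fixed before \cref{lemma:finite-singleton-bound}. Let $E_1$ be the event that some collision bin, or some bin yielding an incorrect candidate, is accepted during any of the $T$ rounds. Let $E_2$ be the event that, given no false acceptance, some true support element is never isolated and recovered by the end of round $T$. Let $E_3$ be the event that the residual after $T$ rounds is nonzero but the final check of $q$ random inputs misses it.

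\textbf{Controlling $E_1$.} I would first handle $E_1$, since conditioning on its complement is needed for the other two lemmas. In each round there are at most $B$ bins with $g(s)\neq 0$, so at most $BT$ candidate bins over the whole run. The candidate produced from a bin $s$ is wrong in one of two cases. Either the bin contains $r\geq 2$ residual elements, or $\mathcal{S}$ already contains a wrong entry; the latter cannot happen before the first false acceptance, so I argue inductively up to that first error. In the wrong-candidate case, \cref{lemma:random-shift-certification} shows that the $L$ independent shifts all pass with probability at most $(1-1/(p_x^{\max}+1))^L\le e^{-L/(p_x^{\max}+1)}\le \delta_x/(3BT)$ for the stated $L$. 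A union bound over the $BT$ bins gives $\Pr[E_1]\le\delta_x/3$.

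One subtlety must be addressed here. The shifts $w^{(\ell)}$ are reused across the bins of a round, but that is harmless, because the union bound only needs the per-bin marginal probability. We do need the shifts to be independent of $R$ and of $\mathcal{S}$ at the time of the round, and they are, since they are freshly sampled.

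\textbf{Controlling $E_2$ and $E_3$.} On $\neg E_1$, every accepted pair is correct: a true singleton satisfies $g(s)=\beta_x(z^\star)$ and decodes exactly, and the check $R\hat z=s$ is consistent with it. The residual spectrum is therefore exactly the unrecovered part of $\beta_x$. Every isolated element is accepted, because a true singleton passes all checks deterministically. Hence \cref{lemma:expected-contraction}, with $B\ge c p_x^{\max}\ge cp_x$ from the choice of $m$, gives $\Pr[E_2\wedge\neg E_1]\le p_x^{\max}/c^T\le\delta_x/3$.

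On $\neg E_1$, a nonzero residual has Walsh support of size at most $p_x^{\max}$. \cref{lemma:final-residual-check} then gives $\Pr[E_3]\le\delta_x/3$. The final step must be made harmless, and it is: on $\neg E_1\wedge\neg E_2$ the residual is identically zero, so the final check passes and the algorithm outputs $\mathcal{S}=\{(z,\beta_x(z)):z\in\mathrm{supp}(\beta_x)\}$.

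\textbf{Main obstacle.} The hardest part is making the conditioning rigorous. \cref{lemma:expected-contraction} is stated conditioned on $\neg E_1$, but $E_1$ depends on the same random $R$'s. I would resolve this by coupling with an idealized run in which every candidate is certified by an oracle. The real run and the ideal run agree on $\neg E_1$, and the contraction bound holds unconditionally for the ideal run. This gives $\Pr[\text{fail}]\le\Pr[E_1]+\Pr[E_2^{\mathrm{ideal}}]+\Pr[E_3]\le\delta_x$.
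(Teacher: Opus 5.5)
Your proposal is correct and follows the paper's proof. It uses the same three failure events (an element never isolated, a collision bin falsely accepted, a nonzero residual missed by the final check), each bounded by $\delta_x/3$ via \cref{lemma:expected-contraction,lemma:random-shift-certification,lemma:final-residual-check} and combined by a union bound. Your extra care (bounding false acceptances only up to the first error, and coupling with an idealized run that accepts exactly the true singletons so that \cref{lemma:expected-contraction} holds unconditionally) makes rigorous a conditioning step the paper treats informally, without changing the argument.
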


\begin{proof}
    Consider the failure events mentioned above, before~\cref{lemma:finite-singleton-bound}. We showed that each of these events occurs with probability at most $\delta_x/3$. If none of these events occur, then every true support element is isolated in some round, every accepted candidate is correct, and the final residual check confirms that no true support element remains undiscovered. Thus, the set $\mathcal{S}$ that we find is exactly the true support and coefficient set of $\beta_x$, and taking a union bound over all three events gives us a total failure probability of at most $\delta_x$, or a success probability at least $1-\delta_x$.
\end{proof}

\section{Query complexity}\label{app-sec:query-complexity}
In this Appendix, we bound the number of queries made to the matrix in order to solve~\cref{prob:main-problem}. We do not attempt to optimize constants or reuse duplicate queries; such optimization can be done at the point of implementation of the algorithm. Furthermore, we note that the powers of $n$, $k$, and $\log(1/\delta)$ \emph{can} be further optimized slightly for both the query complexity and the runtime complexity in~\cref{app-sec:time-complexity}, but we omit such optimization for clarity of exposition. For example, we take the runtime of $n$-bit string arithmetic to be $\bigOh{n}$, where this can be improved using something like bit-packing. However, our primary goal is to show that the number of queries and the runtime scale polynomially in $n$, $k$, and $\log(1/\delta)$, so we accept slightly less optimal bounds in exchange for ease of understanding.

Recall that we denote $\mathcal{X} \coloneqq \{x \in \{0,1\}^n : p_x \geq 1\}$ as the set of distinct $x$ bit strings that appear in the decomposition, where $p_x \coloneqq \abs{\mathrm{supp}(\beta_x)}$ is the number of Pauli strings with Pauli-$X$ part described by $x$. Then, $\abs{\mathcal{X}} \leq k$ and
\begin{equation}
    \defsum{x\in\mathcal{X}}{}{p_x} \leq k.
\end{equation}
For each degenerate $x$, we run~\cref{alg:determine-degenerate-x-pauli-string} with an upper bound $p_x^{\max} \geq p_x$, where a trivial choice is $p_x^{\max} = k$, which is always valid. We make use of the failure budgets from~\cref{app-sec:correctness-error-analysis}, where, recall, we have
\begin{equation}
    \eta_{\mathcal{X}} + \defsum{x\in\mathcal{X}}{}{\eta_x} + \defsum{x\in\mathcal{X}_{\mathrm{deg}}}{}{\delta_x} \leq \delta.    
\end{equation}
A simple choice for these failure budgets satisfying the above constraint and using the trivial upper bound $p_x^{\max}= k$ is given in~\cref{eqn:failure-probabilities}. We state the following theorem on the total query complexity using this trivial bound.

\begin{theorem}[Query complexity]\label{thm:query-complexity}
    Let $p^{\max}_x \geq p_x$ be the upper bound for each degenerate $x$ bit string. Then, using the trivial bound $p^{\max}_x = k$ for all degenerate $x$ bit strings,~\cref{alg:full-algo} solves~\cref{prob:main-problem} with
    \begin{equation}
        \bigOh{nk^2\log{\parens*{\frac{k}{\delta}}} + k^3\log^2{\parens*{\frac{k}{\delta}}}}.
    \end{equation}
    queries to the matrix. Since $k = \poly(n)$, the query complexity is polynomial in $n$, $k$, and $\log(1/\delta)$.
\end{theorem}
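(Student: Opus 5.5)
We will prove the bound by counting queries in each of the three phases of \cref{alg:full-algo} separately and summing. Throughout, one evaluation of $b_x(u)$, or of the residual $\Tilde{b}_x(u)$, costs exactly one row query: we query row $x\oplus u$ and read off column $u$. The subtraction of the already recovered terms in $\mathcal{S}$ is classical post-processing and costs no queries. We substitute $\eta_{\mathcal{X}}=\delta/3$ and $\eta_x=\delta_x=\delta/(3k)$ from \cref{eqn:failure-probabilities}, so every logarithm of the form $\log(1/\eta_x)$ or $\log(3/\delta_x)$ is $\bigOh{\log(k/\delta)}$.

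\textbf{Phase 1, discovery of $\mathcal{X}$.} This phase uses $R_{\mathcal{X}}=\ceil{k\log(3k/\delta)}=\bigOh{k\log(k/\delta)}$ row queries.

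\textbf{Phase 2, \cref{alg:unique-x-bit-strings}.} This is run once for each $x\in\mathcal{X}$, so at most $k$ times. Each run makes $1+n$ queries for $b_x(0^n)$ and $b_x(e_j)$, plus $L_{\mathrm{cert}}=\bigOh{k\log(k/\delta)}$ certification queries. The phase total is $\bigOh{nk+k^2\log(k/\delta)}$.

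\textbf{Phase 3, \cref{alg:determine-degenerate-x-pauli-string}.} This is run for at most $k$ degenerate $x$, taking $p_x^{\max}=k$.
\begin{itemize}
\item Each call to $\mathrm{Fold}$ needs $B=2^m\le 2ck=\bigOh{k}$ evaluations of $\Tilde b_x$.
\item Each round calls $\mathrm{Fold}$ a total of $1+n+L$ times.
\item Here $L=\bigOh{k\log(BT/\delta_x)}$ and $T=\bigOh{\log(k/\delta)}$, with $c$ a fixed constant so that $\log_c$ is $\bigOh{\log}$. Since $BT=\bigOh{k\log(k/\delta)}$, we get $\log(3BT/\delta_x)=\bigOh{\log(k/\delta)}$ and hence $L=\bigOh{k\log(k/\delta)}$.
\item The final residual check adds $q=\bigOh{k\log(k/\delta)}$ queries.
\end{itemize}
So one call costs $T\cdot B\cdot(1+n+L)+q=\bigOh{nk\log(k/\delta)+k^2\log^2(k/\delta)}$ queries. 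Multiplying by at most $k$ degenerate strings gives $\bigOh{nk^2\log(k/\delta)+k^3\log^2(k/\delta)}$.

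\textbf{Summing.} Phase 3 dominates both earlier phases, which yields the stated bound. Substituting $k=\poly(n)$ then makes it polynomial in $n$, $k$ and $\log(1/\delta)$.

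The only delicate step is simplifying $\log(3BT/\delta_x)$. It contains $\log\log(k/\delta)$ and constant-factor terms, and we must absorb these into $\bigOh{\log(k/\delta)}$ without losing a factor. Constants depending on $c$ are treated as $\bigOh{1}$. We could improve the factor of $k$ in Phase 3 by using $\sum_x p_x\le k$ together with adaptive $p_x^{\max}$, but this is not needed for the stated bound.
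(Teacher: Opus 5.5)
Your proposal is correct and follows the paper's proof essentially step by step. It uses the same three-phase count: $R_{\mathcal{X}}$ discovery queries, $n+1+L_{\mathrm{cert}}$ queries per $x$ in \cref{alg:unique-x-bit-strings}, and $(n+1+L)BT+q$ queries per degenerate $x$, with the only cosmetic difference being that you substitute $p_x^{\max}=k$ before summing rather than bounding $\sum_x p_x^{\max}$ and $\sum_x (p_x^{\max})^2$ afterward. Your explicit absorption of $\log(3BT/\delta_x)$ into $\bigOh{\log(k/\delta)}$ is, if anything, slightly cleaner than the paper's remark about suppressing log-log terms.
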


\begin{proof}
    We analyze the total number of queries step-by-step, walking through the algorithm as presented in the main text. First, by~\cref{lemma:x-bit-string-discovery}, we make $R_{\mathcal{X}}$ row queries, where
    \begin{equation}
        R_{\mathcal{X}} = \ceil*{k\log{\parens*{\frac{k}{\eta_{\mathcal{X}}}}}}.
    \end{equation}
    Taking $\eta_{\mathcal{X}} = \delta/3$ as in~\cref{eqn:failure-probabilities}, we have
    \begin{equation}
        R_{\mathcal{X}} = \bigOh{k\log{\parens*{\frac{k}{\delta}}}}.
    \end{equation}
    Next, we quantify how many queries are needed to classify each $x$ as unique or degenerate. For each $x \in \mathcal{X}$,~\cref{alg:unique-x-bit-strings} queries $n$ rows corresponding to the values of $u$ that are Hamming weight-1. Furthermore, we make
    \begin{equation}
        L_{\mathrm{cert}} = \bigOh{(k+1)\log\lp \frac{1}{\eta_x} \rp}
    \end{equation}
    random certification queries. Thus, for each $x \in \mathcal{X}$, we make a total of
    \begin{equation}
        \bigOh{n + L_{\mathrm{cert}}} = \bigOh{n + k\log{\parens*{\frac{1}{\eta_x}}}}
    \end{equation}
    queries. As there are $\abs{\mathcal{X}} \leq k$ distinct $x$ bit strings, this step requires a total of
    \begin{equation}
        \bigOh{kn + k^2\log{\parens*{\frac{1}{\eta_x}}}}
    \end{equation}
    queries. Choosing $\eta_x = \delta/(3k)$ according to~\cref{eqn:failure-probabilities}, we have
    \begin{equation}
        \bigOh{kn+k^2\log{\parens*{\frac{k}{\delta}}}}
    \end{equation}
    queries for this step.

    Finally, we determine the number of queries needed for the recovery of the $z$ bit strings and coefficients of degenerate $x$ bit strings. Denote $\mathcal{X}_{\mathrm{deg}}$ as the set of degenerate $x$ bit strings, where $\abs{\mathcal{X}_{\mathrm{deg}}} \leq \abs{\mathcal{X}}$, and fix a degenerate $x \in \mathcal{X}_{\mathrm{deg}}$ with degeneracy $p_x \leq p^{\max}_x$. Recall,  in~\cref{alg:determine-degenerate-x-pauli-string}, we choose $m = \ceil{\log{(cp^{\max}_x)}}$ for some constant $c > 1$ and we perform $T$ rounds of folded inverse Walsh-Hadamard transforms of size $B = 2^m = \bigTheta{p^{\max}_x}$, where
    \begin{equation}
        T = \ceil*{\log_c{\parens*{\frac{3p^{\max}_x}{\delta_x}}}}.
    \end{equation}
    Furthermore, we define quantities
    \begin{equation}
        L = \ceil*{(p^{\max}_x+1)\log{\parens*{\frac{3BT}{\delta_x}}}},\qquad q = \ceil*{p^{\max}_x\log{\parens*{\frac{3}{\delta_x}}}}
    \end{equation}
    for certification. In each round, we query $B$ unshifted points, $nB$ basis-shifted points, and $LB$ random certification shifts. This gives us a total of $(n+1+L)B$ queries for one round. Over the $T$ independent rounds and the $q$ residual checks after the $T$ rounds, the query complexity for a single degenerate $x$ bit string is
    \begin{equation}
        \bigOh{(n+1+L)BT+q}.
    \end{equation}
    Using the values for $L$, $B$, $T$, and $q$ defined above, we have the query complexity scaling as
    \begin{equation}
        \bigOh{np^{\max}_x\log{\parens*{\frac{k}{\delta}}} + (p^{\max}_x)^2\log^2{\parens*{\frac{k}{\delta}}}},
    \end{equation}
    where we have suppressed lower-order terms and log-log terms. Summing over all degenerate $x \in \mathcal{X}_{\mathrm{deg}}$ bit strings, we require
    \begin{equation}
        \bigOh{n\log{\parens*{\frac{k}{\delta}}}\defsum{x\in\mathcal{X}_{\mathrm{deg}}}{}{p^{\max}_x} + \log^2{\parens*{\frac{k}{\delta}}}\defsum{x\in\mathcal{X}_{\mathrm{deg}}}{}{(p^{\max}_x)^2}}
    \end{equation}
    queries. Combining this with the support-discovery queries and the queries used to decode the unique support elements, we have a total of
    \begin{equation}
        \bigOh{k\log{\parens*{\frac{k}{\delta}}} + k\parens*{n+k\log{\parens*{\frac{k}{\delta}}}} + n\log{\parens*{\frac{k}{\delta}}}\defsum{x\in\mathcal{X}_{\mathrm{deg}}}{}{p^{\max}_x} + \log^2{\parens*{\frac{k}{\delta}}}\defsum{x\in\mathcal{X}_{\mathrm{deg}}}{}{(p^{\max}_x)^2}}\label{eqn:messy-query-complexity}
    \end{equation}
    queries. We can clean this up a bit, where we take $p^{\max}_x = k$ and $\abs{\mathcal{X}_{\mathrm{deg}}} \leq k$, so that
    \begin{equation}
        \defsum{x\in\mathcal{X}_{\mathrm{deg}}}{}{p^{\max}_x} \leq k^2\label{eqn:sum-px}
    \end{equation}
    and
    \begin{equation}
        \defsum{x\in\mathcal{X}_{\mathrm{deg}}}{}{(p^{\max}_x)^2} \leq k^3.\label{eqn:sum-px-squared}
    \end{equation}
    Substituting in these bounds to~\cref{eqn:messy-query-complexity} yields a query complexity of
    \begin{equation}
        \bigOh{nk^2\log{\parens*{\frac{k}{\delta}}} + k^3\log^2{\parens*{\frac{k}{\delta}}}}.
    \end{equation}
    Since $k = \poly(n)$, we can see that the query complexity is still polynomial in $n$, $k$, and $\log(1/\delta)$, as claimed.
\end{proof}

As we can see, there is a decent overhead in the number of queries that we need to make for the various levels of randomization, certifications, etc. needed in our algorithm. This overhead is reflected in the numerical experiments that we presented in~\cref{sec:numerics}, where for small values of $n$, the PennyLane benchmark algorithm saw better performance in runtime. However, asymptotically in $n$, since our algorithm is polynomial in $n$, we eventually see a crossover point (in our experiments, this was at $n=8$), after which our algorithm outperforms the benchmark algorithm, as we would expect.

\section{Time complexity}\label{app-sec:time-complexity}
We now prove the runtime of~\cref{alg:full-algo}, which, as can be inferred from the number of queries, is also polynomial in $n$, $k$, and $\log(1/\delta)$. Formally, we have:

\begin{theorem}[Runtime complexity]\label{thm:runtime-complexity}
    Taking $p^{\max}_x = k$,~\cref{alg:full-algo} runs in time
    \begin{equation}
        \bigOh{n^2k^2\log{\parens*{\frac{k}{\delta}}} + nk^3\log^2{\parens*{\frac{k}{\delta}}}}.
    \end{equation}
    As $k = \poly(n)$, the runtime is polynomial in $n$ and $\log(1/\delta)$.
\end{theorem}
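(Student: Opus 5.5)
The plan is to mirror the query-complexity accounting of \cref{thm:query-complexity} and attach a per-operation time cost to each step. The cost model is as follows. A single query costs $\bigOh{1}$, but we must also account for the work of processing it. A row query returns at most $k$ pairs $(u, M_{v,u})$. Each $n$-bit string operation (XOR, inner product $z\cdot u$, comparison, hashing into a dictionary) costs $\bigOh{n}$. With these conventions the runtime is essentially the query count times an $\bigOh{n}$ arithmetic factor, plus the extra cost of the folded transforms and of the residual oracle.

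\textbf{Discovery and unique-$x$ decoding.}
\begin{itemize}
\item Discovery: the $R_{\mathcal{X}}=\bigOh{k\log(k/\delta)}$ row queries each return at most $k$ entries. Each entry needs one XOR $u\oplus v$ and one insertion into a hash set, at cost $\bigOh{n}$. This gives $\bigOh{nk^2\log(k/\delta)}$.
\item Evaluating $b_x(u)$: a row query on $x\oplus u$ followed by a scan of at most $k$ returned columns for $u$, costing $\bigOh{nk}$. With a dictionary lookup this drops to $\bigOh{n}$.
\item \cref{alg:unique-x-bit-strings} for one $x$: $n+1+L_{\mathrm{cert}}$ evaluations. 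Each certification check also requires computing $(-1)^{\hat z\cdot u}$ in $\bigOh{n}$. The total is $\bigOh{n(n+k\log(k/\delta))}$ per $x$, hence $\bigOh{n^2k+nk^2\log(k/\delta)}$ over $|\mathcal{X}|\le k$ strings.
\end{itemize}

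\textbf{Degenerate-$x$ decoding.} This is the dominant term, and I would bound it per round and then multiply by $T$ and by $|\mathcal{X}_{\mathrm{deg}}|\le k$.
\begin{itemize}
\item Forming query points: each point $R^\intercal t\oplus w$ costs $\bigOh{nm}$ to build. Enumerating $t$ in Gray-code order reduces this to $\bigOh{n}$ per point. Otherwise we absorb the $m=\bigOh{\log k}$ factor into the log terms.
\item Residual oracle: each evaluation of $\Tilde b_x$ subtracts $|\mathcal{S}|\le p_x^{\max}$ characters at $\bigOh{n}$ each, so it costs $\bigOh{nk}$.
\item Folds and candidate decoding: each fold is a fast Walsh--Hadamard transform of size $B$, costing $\bigOh{B\log B}$. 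For each bin, decoding $\hat z$ from $n$ ratios costs $\bigOh{n}$. The check $R\hat z=s$ costs $\bigOh{nm}$. The $L$ random-shift checks cost $\bigOh{nL}$.
\end{itemize}
Per round this gives $\bigOh{(n+1+L)B(n+\text{oracle cost})+(n+1+L)B\log B+B(nm+nL)}$.

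\textbf{Main obstacle.} The hardest step is keeping the residual-oracle overhead from inflating the bound beyond the stated $nk^3\log^2(k/\delta)$ term. Naively multiplying the $\bigOh{(n+L)BT}$ evaluations by $\bigOh{nk}$ would add an extra factor of $k$. My first attempt would avoid this by subtracting $\mathcal{S}$ in the folded domain instead of per query. The known terms contribute $\beta(-1)^{z\cdot w}$ to bin $Rz$, so each fold can be corrected in $\bigOh{|\mathcal{S}|(nm+n)}$ time rather than $\bigOh{Bnk}$. The raw evaluations then cost only $\bigOh{n}$ each when using a dictionary.

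\textbf{Summing up.} With the per-round cost above, the parameters $B=\bigTheta{k}$, $T=\bigOh{\log(k/\delta)}$, $L=\bigOh{k\log(k/\delta)}$ (log-log factors suppressed, as in \cref{thm:query-complexity}) give per degenerate $x$ a cost of $\bigOh{n^2k\log(k/\delta)+nk^2\log^2(k/\delta)}$. Multiplying by $|\mathcal{X}_{\mathrm{deg}}|\le k$ yields the stated $\bigOh{n^2k^2\log(k/\delta)+nk^3\log^2(k/\delta)}$. The final $q$ residual checks and the phase conversions $\alpha=i^{-|x\wedge z|}\beta$ are lower order. Finally, $k=\poly(n)$ gives polynomiality.
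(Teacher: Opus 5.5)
Your proposal is correct and uses the same stage-by-stage accounting as the paper: discovery, unique-$x$ decoding at $\bigOh{n}$ per evaluation, then $n+1+L$ folded spectra of size $B$ per round, multiplied by $T$ and by $\abs{\mathcal{X}_{\mathrm{deg}}}\le k$. Your folded-domain subtraction of $\mathcal{S}$, Gray-code generation of the points $R^\intercal t\oplus w$, and explicit per-bin decoding costs make precise what the paper's proof absorbs without comment into its $\bigOh{B(n+\log B)}$ charge per call to $\mathrm{Fold}$.

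One thing to tighten: compute $Rz$ for each $(z,\beta)\in\mathcal{S}$ once per round, since $R$ is fixed within a round, so that each fold pays only $\bigOh{n\abs{\mathcal{S}}}$ for the correction. Charging $\bigOh{nm\abs{\mathcal{S}}}$ per fold, or ``absorbing'' $m$ into the logarithms instead of using the Gray code, would leave a stray factor of $m=\bigTheta{\log k}$ in the final bound.
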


\begin{proof}
    We prove the runtime complexity by walking through the algorithm as we did above to prove the query complexity. First, the set of $x$ bit strings $\mathcal{X}$ is discovered using a randomized mechanism, which requires querying
    \begin{equation}
        R_{\mathcal{X}} = \ceil*{k\log{\parens*{\frac{k}{\eta_{\mathcal{X}}}}}}
    \end{equation}
    rows. As we took $\eta_{\mathcal{X}} = \delta/3$ in~\cref{eqn:failure-probabilities}, we have $R_{\mathcal{X}} = \bigOh{k\log(k/\delta)}$. Note that each queried row contains at most $k$ nonzero entries. For each entry in the query output, $(u,M_{v,u})$, we calculate
    \begin{equation}
        x = u \oplus v,
    \end{equation}
    which takes time $\bigOh{n}$ as we need to perform addition modulo 2 for $n$-bit strings. Thus, doing this at most $k$ times, this $x$-support discovery algorithm takes total time
    \begin{equation}
        \bigOh{nkR_{\mathcal{X}}} = \bigOh{nk^2\log{\parens*{\frac{k}{\delta}}}}.
    \end{equation}
    Then,~\cref{alg:unique-x-bit-strings} classifies and decodes unique $x$ bit string, where for each $x \in \mathcal{X}$, we evaluate $b_x(u)$ at $u=0^n$, at $n$ values of $u$ with Hamming weight 1 (i.e., the standard basis vectors), and at
    \begin{equation}
        L_{\mathrm{cert}} = \ceil*{(k+1)\log{\parens*{\frac{1}{\eta_x}}}}
    \end{equation}
    points to check if it is truly unique. Thus, for each $x \in \mathcal{X}$, we have a total of $\bigOh{n+L_{\mathrm{cert}}}$ evaluations, where each evaluation involves calculating the row index $x \oplus u$, looking up the requested column in the returned sparse row, computing the Walsh sign, and performing basic arithmetic, for a total time cost of $\bigOh{n}$ per evaluation. As such, the classification of each $x$ bit string as unique and subsequent decoding of unique ones takes a total time
    \begin{equation}
        \bigOh{\abs{\mathcal{X}}n(n+L_{\mathrm{cert}})}.
    \end{equation}
    We know that $\abs{\mathcal{X}} \leq k$ and $\eta_x = \delta/(3k)$ by~\cref{eqn:failure-probabilities}, so substituting these values in we have $L_{\mathrm{cert}} = \bigOh{k\log{(k/\delta)}}$ and so the total time for this stage is
    \begin{equation}
        \bigOh{kn^2 + nk^2\log{\parens*{\frac{k}{\delta}}}}.
    \end{equation}
    
    Finally, we focus our attention on the degenerate $x$ bit strings. First, fix a degenerate $x$ with degeneracy $p_x \leq p^{\max}_x$. Recall, for~\cref{alg:determine-degenerate-x-pauli-string}, we choose $B = 2^m$ bins with $m = \log{p^{\max}_x}$ such that $B = \bigTheta{p^{\max}_x}$. Then, we choose $T = \bigOh{\log{\parens*{\frac{k}{\delta}}}}$ rounds and $L = \bigOh{p^{\max}_x\log{\parens*{\frac{k}{\delta}}}}$ random certification shifts to eliminate false singletons, where we used $p^{\max}_x \leq k$ and $\delta_x = \delta/(3k)$.
    In each round, we compute one unshifted folded spectrum, $n$ basis-shift folded spectra, and $L$ certification folded spectra, giving us a total of $n+1+L$ folded spectra to deal with. Plugging in for $L$, this is
    \begin{equation}
        \bigOh{n+p^{\max}_x\log{\parens*{\frac{k}{\delta}}}}\label{eqn:number-folded-spectra}
    \end{equation}
    folded spectra. For each such folded spectrum (i.e., call to the $\mathrm{Fold}(x,\mathcal{S},R,w)$ function we defined in~\cref{eqn:fold-function}), we have to evaluate $b_x(R^\intercal t \oplus w)$ on all $B$ values of $t$, perform an $m$-bit inverse Walsh-Hadamard transform, and subtract the terms that we have already recovered. Since each Walsh-Hadamard transform takes time $\bigOh{B\log{B}}$, and since $B = \bigTheta{p^{\max}_x}$, each folded spectrum takes time
    \begin{equation}
        \bigOh{p^{\max}_x(n+\log{p^{\max}_x})}\label{eqn:runtime-cost-per-folded-spectra}
    \end{equation}
    to calculate. Combining~\cref{eqn:number-folded-spectra,eqn:runtime-cost-per-folded-spectra}, calculating all folded spectra in one round costs
    \begin{equation}
        \bigOh{n^2p^{\max}_x + n(p^{\max}_x)^2\log{\parens*{\frac{k}{\delta}}}}
    \end{equation}
    where we have suppressed lower order terms. Across $T$ rounds, the runtime for a fixed degenerate $x$ bit string tallies out to
    \begin{equation}
        \bigOh{n^2p^{\max}_x\log{\parens*{\frac{k}{\delta}}} + n(p^{\max}_x)^2\log^2{\parens*{\frac{k}{\delta}}}}.
    \end{equation}
    The final residual check, recall, uses
    \begin{equation}
        q = \bigOh{p^{\max}_x\log{\parens*{\frac{k}{\delta}}}}
    \end{equation}
    additional evaluations, but this is dominated by the above bound, so we don't include this in the final runtime. If we sum over all degenerate $x \in \mathcal{X}_{\mathrm{deg}}$, then the total runtime to recover these degenerate $x$ bit strings is
    \begin{equation}
        \bigOh{n^2\log{\parens*{\frac{k}{\delta}}}\defsum{x\in\mathcal{X}_{\mathrm{deg}}}{}{p^{\max}_x} + n\log^2{\parens*{\frac{k}{\delta}}}\defsum{x\in\mathcal{X}_{\mathrm{deg}}}{}{(p^{\max}_x)^2}}.
    \end{equation}
    If we combine this with the previous steps to find the $x$ bit strings, determine the uniqueness, and decode the unique $x$ bit strings, we have a total runtime of
    \begin{equation}
        \bigOh{kn^2 + nk^2\log{\parens*{\frac{k}{\delta}}} + n^2\log{\parens*{\frac{k}{\delta}}}\defsum{x\in\mathcal{X}_{\mathrm{deg}}}{}{p^{\max}_x} + n\log^2{\parens*{\frac{k}{\delta}}}\defsum{x\in\mathcal{X}_{\mathrm{deg}}}{}{(p^{\max}_x)^2}}.
    \end{equation}
    Now, we consider the two levels of knowledge we have of the degeneracy of each $x$ bit string. Taking the trivial upper bound $p^{\max}_x = k$ and using~\cref{eqn:sum-px,eqn:sum-px-squared}, the total runtime becomes
    \begin{equation}
        \bigOh{n^2k^2\log{\parens*{\frac{k}{\delta}}} + nk^3\log^2{\parens*{\frac{k}{\delta}}}},
    \end{equation}
    thus proving the claim in the theorem.
\end{proof}

This is polynomial in $n$, $k$, and $\log(1/\delta)$, and since $k = \poly(n)$ and $\delta$ can be chosen such that the runtime remains at most polynomial in $n$, we conclude that the total runtime of the algorithm is at most polynomial in $n$.

\section{Overall guarantee}\label{app-sec:overall-guarantee}
Combining the above analysis of correctness and complexity from~\cref{app-sec:correctness-error-analysis,app-sec:query-complexity,app-sec:time-complexity}, we have the following final result:

\begin{theorem}
    Let $M \in \CC^{2^n \times 2^n}$ satisfy the promise in~\cref{prob:main-problem} and fix $\delta \in (0,1)$ with failure budgets satisfying
    \begin{equation}
        \eta_{\mathcal{X}} + \defsum{x\in\mathcal{X}}{}{\eta_x} + \defsum{x\in\mathcal{X}_{\mathrm{deg}}}{}{\delta_x} \leq \delta.
    \end{equation}
    A convenient choice are those values given in~\cref{eqn:failure-probabilities}. Then,~\cref{alg:full-algo} determines the Pauli decomposition of $M$ with probability at least $1-\delta$. If we take the trivial bound $p^{\max}_x = k$, then the query complexity is
    \begin{equation}
        \bigOh{nk^2\log{\parens*{\frac{k}{\delta}}} + k^3\log^2{\parens*{\frac{k}{\delta}}}}
    \end{equation}
    and the runtime is
    \begin{equation}
        \bigOh{n^2k^2\log{\parens*{\frac{k}{\delta}}} + nk^3\log^2{\parens*{\frac{k}{\delta}}}}.
    \end{equation}
    If $k = \poly(n)$, then the total query complexity and the total runtime complexity remain polynomial in $n$, $k$, and $\log(1/\delta)$.
\end{theorem}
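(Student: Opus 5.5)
The proof assembles the appendix results through a union bound over one good event per stage; the complexity claims come directly from Theorems~\ref{thm:query-complexity} and~\ref{thm:runtime-complexity}.

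\textbf{Defining the good event.} Let $E_{\mathcal{X}}$ be the event that the discovery step returns exactly $\mathcal{X}$. For each $x\in\mathcal{X}$, let $E_x$ be the event that \textsc{Unique} classifies $x$ correctly. For each degenerate $x$, let $F_x$ be the event that \textsc{Degenerate} returns exactly the support and coefficients of $\beta_x$.

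\textbf{Bounding each failure probability.}
\begin{itemize}
\item By Lemma~\ref{lemma:x-bit-string-discovery}, $\Pr[E_{\mathcal{X}}^c]\le\eta_{\mathcal{X}}$. The same lemma shows the discovery step has no false positives, so only misses can occur.
\item By Theorem~\ref{thm:unique-algo-correctness}, a unique $x$ is decoded correctly with certainty. A degenerate $x$ is misclassified as unique with probability at most $\eta_x$. Hence $\Pr[E_x^c]\le\eta_x$.
\item On $E_x$, every $x$ placed in $\mathcal{X}_{\mathrm{deg}}$ really has $p_x\ge 2$. Theorem~\ref{thm:fixed-x-recovery-probability} then applies with $p_x^{\max}=k\ge p_x$ and gives $\Pr[F_x^c]\le\delta_x$.
\end{itemize}

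\textbf{Union bound.} The bounds on $E_x$ and $F_x$ must hold for every $x$ in the discovered set, which is itself random. I would handle this by conditioning on $E_{\mathcal{X}}$ and noting that each subroutine uses fresh randomness independent of the discovery step. This lets me sum the per-$x$ failure probabilities over the true set $\mathcal{X}$, which has $|\mathcal{X}|\le k$. The total failure probability is then at most $\eta_{\mathcal{X}}+\sum_{x}\eta_x+\sum_{x\in\mathcal{X}_{\mathrm{deg}}}\delta_x$. With the budgets in~\eqref{eqn:failure-probabilities}, this is at most $\delta/3+k\cdot\delta/(3k)+k\cdot\delta/(3k)=\delta$.

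\textbf{Correct output on the good event.} Given all good events, each recovered pair $(x,z,\beta_x(z))$ is converted via $\alpha_{x,z}=i^{-|x\wedge z|}\beta_x(z)$. By~\eqref{eqn:alpha-beta-relation} and Lemma~\ref{lemma:pauli-string-matrix-element}, this yields exactly the true coefficients. The output $\mathcal{D}$ therefore equals the Pauli decomposition of $M$.

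\textbf{Main obstacle.} The delicate step is Theorem~\ref{thm:fixed-x-recovery-probability}, and in particular its input Lemma~\ref{lemma:expected-contraction}. That lemma conditions on no false singleton being accepted, yet false acceptances would corrupt the residual in later rounds. I would handle this by coupling: run the process alongside an idealized version in which false candidates are always rejected. The two processes agree up to the first false acceptance. That first false acceptance has probability at most $\delta_x/3$, by Lemma~\ref{lemma:random-shift-certification} with a union bound over the at most $BT$ bins. On the complementary event, the contraction bound and Markov's inequality give a miss probability of at most $\delta_x/3$. Lemma~\ref{lemma:final-residual-check} covers the remaining $\delta_x/3$.
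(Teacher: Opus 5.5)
Your proposal is correct and takes essentially the same route as the paper. Both use a union bound over the $x$-discovery, uniqueness-classification, and degenerate-decoding failure events with budgets $\eta_{\mathcal{X}}$, $\eta_x$, $\delta_x$, and both take the complexity bounds directly from \cref{thm:query-complexity,thm:runtime-complexity}. Your extra care---summing per-$x$ failures over the true set $\mathcal{X}$ using fresh randomness, and coupling with an idealized no-false-acceptance process to justify the conditioning in \cref{lemma:expected-contraction}---makes rigorous two points the paper's proof passes over, without changing the structure of the argument.
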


\begin{proof}
    By~\cref{lemma:x-bit-string-discovery}, the subroutine to find the $x$ bit strings succeeds with probability at least $1 - \eta_{\mathcal{X}}$. Conditioned on this process succeeding, we run~\cref{alg:unique-x-bit-strings} for every $x \in \mathcal{X}$, where by~\cref{thm:unique-algo-correctness}, each unique $x$ bit string is decoded correctly and each degenerate $x$ bit string is passed along to~\cref{alg:determine-degenerate-x-pauli-string}. This procedure correctly determines each degenerate $x$ bit string with probability at least $1-\delta_x$, as shown in~\cref{thm:fixed-x-recovery-probability}. Taking a union bound over all of these subroutines, the total failure probability is bounded by
    \begin{equation}
        \eta_{\mathcal{X}} + \defsum{x\in\mathcal{X}}{}{\eta_x} + \defsum{x\in\mathcal{X}_{\mathrm{deg}}}{}{\delta_x} \leq \delta.
    \end{equation}
    Thus, the whole algorithm finds the full Pauli decomposition with probability at least $1-\delta$. Finally, the query and runtime complexity bounds are proven according to~\cref{thm:query-complexity,thm:runtime-complexity}, respectively.
\end{proof}
\end{document}